\documentclass[journal]{IEEEtran}

\usepackage[T1]{fontenc}
\usepackage[english]{babel}
\usepackage{color}
\usepackage{xcolor}

\usepackage{setspace}
\usepackage{longtable}
\usepackage{array}
\usepackage{cite}

\ifCLASSINFOpdf
\usepackage[pdftex]{graphicx}
\usepackage{float}   
\else
\fi

\usepackage{amsmath,amsfonts,amssymb,mathtools,amsthm,mathrsfs}
\usepackage{enumitem}
\usepackage[dvipsnames]{xcolor}
\usepackage[normalem]{ulem}
\usepackage{makecell}
\usepackage{subfigure}
\usepackage{algpseudocode}

\theoremstyle{definition}

\newtheorem{theorem}{Theorem}

\newtheorem{lemma}[theorem]{Lemma}

\usepackage{nomencl}
\makenomenclature
\usepackage{array,multirow,graphicx}
\usepackage{soul}
\usepackage{layout,array}
\usepackage{ragged2e}
\usepackage{algorithm}
\usepackage{algpseudocode}
\graphicspath{{figures/}}
\usepackage{tikz}
\usetikzlibrary{shapes.geometric, arrows}
\usepackage{mathtools}

\usepackage{etoolbox}

\renewcommand\nomgroup[1]{%
  \item[\bfseries
  \ifstrequal{#1}{A}{Indices and Sets}{%
  \ifstrequal{#1}{B}{Parameters and Variables}{%
  \ifstrequal{#1}{C}{Functions}{}}}%
]}

\newcommand{\probP}{\text{I\kern-0.15em P}}

\begin{document}
\title{
Computationally Transparent Power Flexibility Aggregation
via Directional Support Evaluation}
\author{Mingyu Zhai,~\IEEEmembership{Student Member,~IEEE} Liang~Du,~\IEEEmembership{Senior Member,~IEEE}
Yue~Chen,~\IEEEmembership{Senior Member,~IEEE}
\thanks{M. Zhai and L. Du are with the Department of Electrical Engineering, Villanova University, Villanova, PA 19085 USA. E-mail: \{mzhai,~ ldu\}@villanova.edu.}
\thanks{Y. Chen is with the Department of Mechanical and Automation Engineering, The Chinese University of Hong Kong, Hong Kong, China. E-mail: yuechen@mae.cuhk.edu.hk.}
}

\maketitle

\begin{abstract}
The growing penetration of distributed energy resources (DERs) creates
new opportunities for improving resource adequacy, operational
flexibility, and power-system reliability. However, efficiently
characterizing the aggregate flexibility of heterogeneous DER portfolios
remains challenging because of high dimensionality, diverse operating
dynamics, and spatio-temporal coupling constraints. This paper proposes
a computationally transparent framework that formulates aggregate
flexibility characterization as a sequence of linear programming
(LP)-based directional support evaluations. The framework accommodates
heterogeneous DER technologies and configurable device-, feeder-, and
portfolio-level constraints. The convex hull of the resulting boundary
trajectories provides a feasible inner approximation of the coupled
aggregate power flexibility region. Theoretical analysis establishes
feasibility and non-overestimation guarantees and characterizes the
computational dependence on portfolio size, scheduling horizon, and
direction budget. In the small-scale case, the proposed method produces
tighter projected regions than box, zonotope, and Chebyshev inner-ball
benchmarks and matches a dense directional reference within numerical
tolerance using 16--64 construction directions. Large-scale simulations
with up to 2000 DERs demonstrate approximately linear empirical runtime
scaling with the portfolio size and direction budget over the tested
ranges. The results establish a scalable and theoretically grounded
approach for aggregate DER flexibility characterization.
\end{abstract}

\begin{IEEEkeywords}
Power Flexibility Aggregation, Computational Transparency, Flexibility Charactization Approximation.
\end{IEEEkeywords}



\section{Introduction}

The proliferation of demand response resources (DRRs) and ever-growing penetration of distributed energy resources (DERs), such as renewable generations, alternative energy resources, electrified transportation, and multi-scale battery energy storage systems (BESS), conveys a profound paradigm shift in the control of power grids. Specifically, uncertainties caused by DER penetration and climate change intertwine and amplify challenges posed on system reliability. 

To address the imminent technical challenges, 
the concept of power flexibility has been rejuvenated as the ability of distribution feeders to actively control their power injections (at the substation interface) and/or consumption patterns in response
to external control or market signals, which is also known as power flexibility aggregation~\cite{10738509}. 
However, accurately quantifying and aggregating the flexibility potential of DERs remains a significant challenge~\cite{hamidi2025performance}.
Specifically, integrating massive, heterogeneous DERs, BESS, and DRRs through incentives and aggregation mechanisms against operational constraints across devices as well as stochastic end-user behaviors is often computationally intractable~\cite{7864461} or impractical~\cite{trangbaek2012exact}
if an exact, feasible power flexibility region (PFR) is desired by aggregation. Moreover, aggregating individual flexibility profiles into a system-level representation remains a non-trivial task due to the high dimensionality and the combinatorial nature of feasible operating regions. Consequently, exact flexibility aggregation is computationally intractable in general and typically exhibits NP-hard complexity~\cite{mukhi2025robust}.

To address the aforementioned computational challenges, many methods have been proposed  in the literature, which can be grouped by four main categories: inner approximation methods, outer approximation methods, directional/projection-based, and data-driven learning-based methods~\cite{wen2022aggregate,10738509,he2026analyzing}. Each category offers a different trade-off between computational efficiency, modeling fidelity, and operational feasibility.
 
\textit{\textbf{Inner approximation methods}} construct conservative subsets of the actual aggregated feasible region, such as polytopes~\cite{chen2019aggregate}, zonotopes~\cite{10576684}, and balls~\cite{11225608}. These methods guarantee that all points within the approximated region are operationally feasible, thereby preserving system reliability and physical consistency. However, the main drawback is that these techniques often underestimate the actual flexibility potential and lead to overly cautious control decisions or reduced market participation revenue. For example, the work in~\cite{yuan2025aggregation} introduces an iterative scheme based on high-dimensional polyhedral projection to improve the efficiency of inner approximation method while maintaining its tightness. Although this approach improves performance relative to naive convex bounding techniques, it still sacrifices some flexibility volume for the sake of guaranteed feasibility. Therefore, some other efforts explore improved strategies to ameliorate approximation methods achieving more effective and efficient approximation. A closed-form exact aggregation model is utilized in~ \cite{wen2023improved} to systematically design a tighter and computationally efficient inner approximation. Meanwhile, a Wasserstein distributionally robust joint chance-constraint framework is designed in~\cite{zhou2024aggregated} to characterize aggregated feasible active power regions with less conservative outcomes and reduced computational complexity. Similarly, \cite{altaha2024efficient} proposes to find the largest safe inner polytope inside a Minkowski sum and derive conservative linear programming (LP) approximations to improve its tightness and computational performance. To summarize, existing inner approximation methods mainly focus on how to improve approximation tightness while maintaining feasibility guarantees and reducing computational complexity, enabling flexibility aggregation for large DER populations.

\begin{table*}[t] 
\label{review}
\caption{Comparison with the Existing Literature}
\centering
\footnotesize
\begin{tabular}{lccccccc}
\hline Reference & \makecell{Approximation\\Method}& 
DERs & 
\makecell{Network\\Constraints} &
\makecell{Time-Coupled\\(Heterogeneity)}&
\makecell{Computational \\Time Comparison} & 
\makecell{Computational Complexity\\Theoretical Analysis} & \\
\hline~\cite{wen2023improved}, \cite{altaha2024efficient}, \cite{zhang2023fast} & Inner & EVs & $\times$ & $\checkmark$ & $\checkmark$ & $\times$ \\
\hline~\cite{yi2021aggregate} & Inner & BESS,TCLs & $\times$ & $\times$ & $\checkmark$ & $\times$ \\
\hline~\cite{wen2022aggregate}, \cite{chen2019aggregate} & Inner & PV,BESS,DCL,HVAC & $\checkmark$ & $\times$ & $\checkmark$ & $\times$ \\
\hline~\cite{li2024modeling}, \cite{li2026projection} & Projection & DG,BESS,EVs,PV,HVAC & $\times$ & $\checkmark$ & $\checkmark$ & $\times$ \\
\hline~\cite{barot2017concise}, \cite{kundu2018approximating}  & Outer & BESS,TCLs,PV & $\times$ & $\times$ & $\times$ & $\times$ \\
\hline~\cite{li2023distribution},\cite{wang2025bayesian} & Data-driven & Loads,PV,ESS,DCL & $\checkmark$ & $\checkmark$ & $\times$ & $\times$ \\
\hline~This work & Projection & PV,LIB,EV,ZAB,SFB & $\checkmark$ & $\checkmark$ & $\checkmark$ & $\checkmark$ \\
\hline
\end{tabular}
\end{table*}

\textit{\textbf{Outer approximation methods}}, by contrast, construct supersets that enclose the actual aggregated flexibility region. These methods are designed to offer aggressive flexibility estimation, enabling more proactive operational strategies and increased utilization of DERs. However, the major drawback lies in their tendency to include infeasible solutions, i.e.,  power flexibility points which cannot be realized due to physical or temporal constraints. For instance,~\cite{barot2017aggregate} proposes an approach based on computing outer bounds of the Minkowski sum of individual feasible regions. This technique achieves improved computational tractability and expands flexibility representation at the cost of including points that violate local constraints, which can lead to dispatch errors, disaggregation infeasibility or compromise loads' traceability. 
As a comparison, 
while inner approximations favor safety and robustness, outer approximations promote higher (potential) gains with higher risk.

\textit{\textbf{Directional/projection-based methods}} characterize aggregated power flexibility regions by repetitively solving optimization problems along sampled directions, evaluating feasible set, and simultaneously maximizing or minimizing the aggregate power profile. Consequently, a set of boundary points or low-dimensional projections of the true regions are constructed. These methods are well-suited for scalability and high-dimensional settings since each projection direction can be solved independently, thereby naturally enabling parallel computation~\cite{liu2025coupling}. In~\cite{li2026projection}, a projection-based maximal inner approximation method is proposed to approximate the aggregated flexibility of large-scale heterogeneous DERs with diverse operational states and time horizons. 
Compared with other methods, projection-based methods can achieve task-oriented direction/preference selection rather than purely geometric conservative bounding. For instance, the Minkowski sum of individual flexibility sets can be reformulated into an equivalent polytope projection problem using a matrix-transformation technique \cite{liu2025preference}, which embeds reserve-dispatch preferences into the aggregation process, mitigates excessive conservativeness, and improves its practical usefulness.
However, too many samples may be required to achieve a tight approximation of the feasible boundary, especially when the aggregated region is highly anisotropic or contains sharp facets, which leads to high computational burden.

\textit{\textbf{Data-driven learning-based
methods}} leverage operational data and machine learning models to approximate the aggregated power flexibility region without explicitly modeling efforts, which reduce online computational burden and naturally capture complex nonlinearities and uncertainties. However, a key limitation is that these approximations typically lack explicit feasibility guarantees and are highly sensitive to data coverage, model generalization, and distributional shifts. In~\cite{li2023distribution}, a network-informed data-driven framework is proposed to generate high-quality training data for a classifier to estimate the feasible power output region with high accuracy. This work also highlights learning-based surrogates as a scalable approach to high-dimensional characterization when explicit polytope construction is computationally prohibitive.
In \cite{mukhi2026exact}, a data-driven inverse optimization framework is developed to model the spatial correlations between renewable generation and demand and to capture correlated uncertainty for flexibility characterization. Finally, our previous effort~\cite{wang2025bayesian} proposes a network-aware, near-optimal, and sample-efficient data-driven method, in which inner-box and outer-box approximated regions are first constructed for bus- and feeder-level PFRs, respectively. A Bayesian active learning-based soft calibration strategy is proposed to iteratively and interactively learn a decision boundary of the binary classifier as an improved feeder-level PFR while providing trustworthy measurement to system operators.


As summarized 
in Table~\ref{review}, the existing literature mostly focuses on improving approximation performance while reducing computational costs, which often overlooks theoretical complexity analysis and lacks transparent insights into the practicality and scalability aspects. Existing studies typically report numerical runtime comparisons, but these are case-dependent and provide limited insight into algorithmic efficiency. To our best knowledge, a transparent, general, and theoretically grounded complexity analysis—one that explicitly characterizes asymptotic scaling with respect to system size, time horizon, and constraint coupling—remains a major gap. This work aims at establishing a tunable complexity models to enable fair cross-method comparisons and support improvements in computational efficiency and real-time applicability.

Furthermore, most existing studies characterize feasible power flexibility regions by focusing on conventional DER types. Consequently, estimation methods should be more generalizable and DER-agnostic, so that they can accommodate heterogeneous portfolios without requiring substantial reformulation. For instance, driven by the rapid growth of data centers and AI-training infrastructure, emerging energy storage technologies (e.g., zinc–air batteries (ZAB) and solid-fuel batteries (SFB)) are gaining increasing attention~\cite{rahman2026energy}, yet their operational characteristics differ fundamentally from conventional lithium-ion batteries. Therefore, developing estimation methods that can seamlessly incorporate heterogeneous new DER devices while maintaining computational efficiency constitutes an impactful research direction.

To address these gaps, this paper proposes a computationally
transparent framework for characterizing the aggregate flexibility of
heterogeneous DER portfolios through adaptive directional support
evaluations. As illustrated in Fig.~\ref{framework}, the framework
integrates device-level operating constraints and system-level coupling
limits while providing a tunable trade-off between approximation
accuracy and computational effort. A unified complexity analysis is
further developed to characterize the principal factors governing
runtime and scalability.

The main contributions of this work are fourfold:
\begin{enumerate}
    
\item \textbf{A unified LP-based directional operator for scalable
flexibility aggregation:}
The aggregate flexibility problem is formulated as a sequence of
LP-based support evaluations under configurable operational and coupling
constraints. The resulting feasible boundary trajectories construct an
inner approximation without explicitly enumerating the vertices or
facets of the high-dimensional aggregate flexibility region.

\item \textbf{DER-agnostic modeling incorporating emerging long-duration
storage technologies:}
A unified modeling layer is developed for heterogeneous DERs, including
EVs, lithium-ion batteries, PV systems, zinc--air batteries, and
solid-fuel batteries. In particular, nonconventional refueling and
fuel-regeneration dynamics are incorporated without changing the
structure of the aggregation framework.

\item \textbf{Provable feasibility and non-overestimation of the
aggregated PFR:}
Each boundary trajectory generated by the proposed support formulation
belongs to the coupled aggregate flexibility region. Consequently, the
convex hull of the sampled trajectories provides a feasible inner
approximation and excludes infeasible flexibility claims. This guarantee
distinguishes the proposed framework from outer approximations and
learning-based surrogates without explicit feasibility guarantees.

\item \textbf{Complexity characterization and large-scale validation:}
A theoretically grounded analysis characterizes the dependence of the
computational cost on the number of DERs, scheduling horizon, model
constraints, and projection directions. Numerical studies with
portfolios of up to 2000 DERs demonstrate approximately linear empirical
scaling with the portfolio size and direction budget over the tested
ranges.
    
\end{enumerate}

The remainder of this paper is organized as follows. Section~II presents
the device-level models, aggregate flexibility formulation, proposed
directional approximation method, and its feasibility guarantee.
Section~III analyzes the computational complexity of the proposed
method. Section~IV presents the benchmark comparisons, direction sensitivity analysis, and large-scale scalability results. Section~V
concludes the paper.

\section{Aggregate Flexibility Formulation}
\label{method}
As illustrated in Fig.~\ref{framework}, the proposed framework begins with a flexibility request and a joint convex model containing device-level operating constraints and aggregator-level coupling limits. Construction directions are generated in batches by sampling Gaussian vectors and normalizing them to unit Euclidean length. For each direction, a support LP returns a feasible aggregate net-power trajectory. After each batch, the maximum normalized support deficit is evaluated using a separate set of validation directions, whose true support values are computed once and cached. Direction sampling terminates when the validation deficit falls below the prescribed tolerance, its improvement remains below the saturation threshold for a specified number of consecutive batches, or the maximum direction budget is reached. The convex hull of the retained trajectories forms a feasible inner representation of the aggregate power region under the continuous convex model.

\begin{figure}[t]
    \centering
    \includegraphics[width=0.48\textwidth]{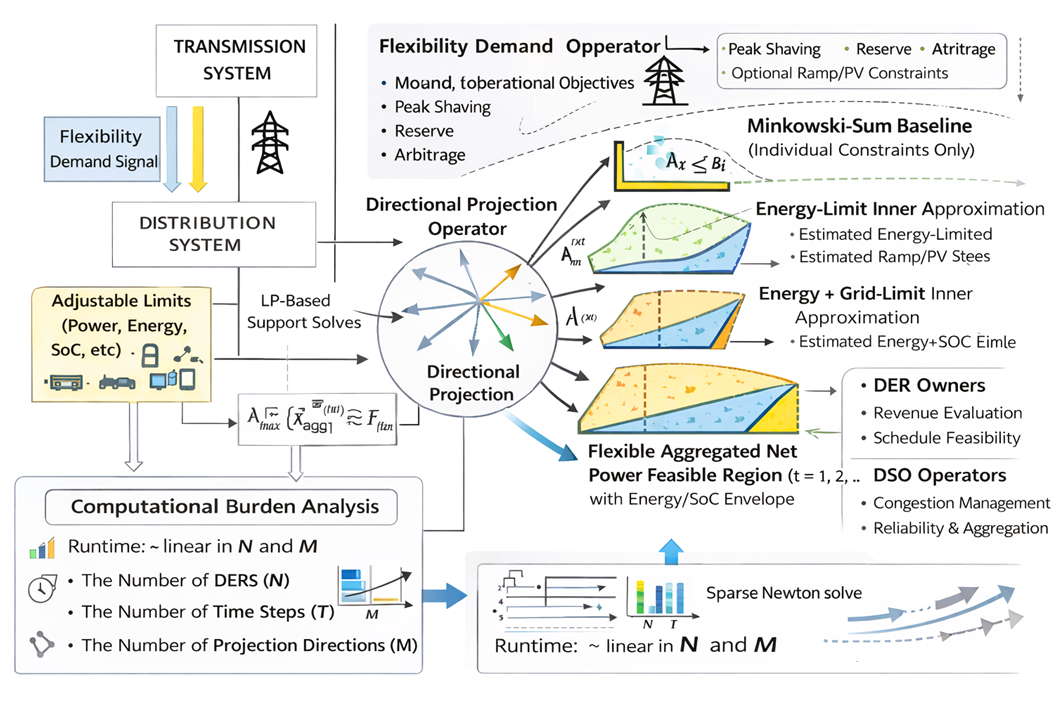} 
    \vspace{-1em}
    \caption{Overview of the proposed framework.} 
    \label{framework}
\end{figure}

\subsection{Device-Level Flexibility Models}\label{AA}
\textit{1) EVs:} 
EVs operate in vehicle-to-grid (V2G) mode, which permits bidirectional power exchange, or vehicle-one-grid (V1G) mode, which permits controlled charging only. 
Their flexibility is described by operational constraints~\cite{bao2024novel,11104140}:
\begin{equation}
\begin{aligned}
\underline{p}_{i,\mathrm{EV}}^{c,t}
&\le p_{i,\mathrm{EV}}^{c,t}
\le \overline{p}_{i,\mathrm{EV}}^{c,t},&
\underline{p}_{i,\mathrm{EV}}^{d,t}
&\le p_{i,\mathrm{EV}}^{d,t}
\le \overline{p}_{i,\mathrm{EV}}^{d,t},\\
\underline{E}_{i,\mathrm{EV}}^{c,t}
&\le E_{i,\mathrm{EV}}^{c,t}
\le \overline{E}_{i,\mathrm{EV}}^{c,t},&
\underline{E}_{i,\mathrm{EV}}^{d,t}
&\le E_{i,\mathrm{EV}}^{d,t}
\le \overline{E}_{i,\mathrm{EV}}^{d,t},\\
E_{i,\mathrm{EV}}^{c,t}
&=\sum\nolimits_{\tau=1}^t p_{i, \mathrm{EV}}^{c}(\tau) \Delta t,&
E_{i,\mathrm{EV}}^{d,t}
&=\sum\nolimits_{\tau=1}^t p_{i, \mathrm{EV}}^{d}(\tau) \Delta t,
\end{aligned}
\label{eq:ev_model}
\end{equation}
where \(p_{i,\mathrm{EV}}^{c,t}\) and
\(p_{i,\mathrm{EV}}^{d,t}\) are the charging and discharging powers,
while \(E_{i,\mathrm{EV}}^{c,t}\) and
\(E_{i,\mathrm{EV}}^{d,t}\) are the corresponding cumulative
energies. V1G operation is obtained by imposing
\(p_{i,\mathrm{EV}}^{d,t}=0\).

\textit{2) Battery and Alternative Energy Storage Systems:} This study considers distributed battery storage systems, each constrained by charging and discharging limits, as well as energy capacity. Moreover, BESS should also satisfy energy balance constraints over time.
\begin{equation}
\begin{gathered}
\underline{p}_{j,\mathrm{BS}}^{c,t}
\le p_{j,\mathrm{BS}}^{c,t}
\le \overline{p}_{j,\mathrm{BS}}^{c,t},
\quad
\underline{p}_{j,\mathrm{BS}}^{d,t}
\le p_{j,\mathrm{BS}}^{d,t}
\le \overline{p}_{j,\mathrm{BS}}^{d,t},\\
\underline{\mathrm{SoC}}_{j,\mathrm{BS}}^{t}
\le \mathrm{SoC}_{j,\mathrm{BS}}^{t}
\le \overline{\mathrm{SoC}}_{j,\mathrm{BS}}^{t},
\end{gathered}
\label{eq:bess_limits}
\end{equation}
\vspace{-0.5em}
\begin{equation}
\mathrm{SoC}_{j,\mathrm{BS}}^{t+1}
=\mathrm{SoC}_{j,\mathrm{BS}}^{t}
+\eta_{\mathrm{ch}}p_{j,\mathrm{BS}}^{c,t}\Delta t
-\frac{p_{j,\mathrm{BS}}^{d,t}\Delta t}{\eta_{\mathrm{dis}}}.
\label{eq:bess_balance}
\end{equation}
where $p_{j,\mathrm{BS}}^{c,t}$ and $p_{j,\mathrm{BS}}^{d,t}$ denote the charging and discharging powers of battery storage system $j$ at time $t$, respectively. Moreover, $\mathrm{SoC}_{j,\mathrm{BS}}^{t}$ denotes its state of charge, while $\eta_{\mathrm{ch}}$ and $\eta_{\mathrm{dis}}$ denote charging and discharging efficiencies, respectively.

Recently, more advanced battery types have been explored for data center operation, such as ZAB and SFB~\cite{BloomE}. These batteries are different in charging behaviors compared with conventional lithium-ion batteries, as their energy replenishment relies on external reactant replacement or fuel regeneration rather than direct electrochemical recharging~\cite{Zinc}. Therefore, additional constraints are required to characterize their flexibility regions separately from those of conventional BESSs. For a ZAB, introducing a refueling parameter into the SoC evolution equation to represent swap events yields the following constraints:
\begin{equation}
\begin{aligned}
\mathrm{SoC}_{j,\mathrm{ZAB}}^{l}(t+1)
={}&\mathrm{SoC}_{j,\mathrm{ZAB}}^{l}(t)
+\frac{\eta_{\mathrm{ch}}p_{j,\mathrm{ZAB}}^{c,l}(t)\Delta t}
{E_{j,\mathrm{ZAB}}^{l}} \\
&-\frac{p_{j,\mathrm{ZAB}}^{d,l}(t)\Delta t}
{\eta_{\mathrm{dis}}E_{j,\mathrm{ZAB}}^{l}}
+\frac{r_{j,\mathrm{ZAB}}^{l}(t)\Delta t}
{E_{j,\mathrm{ZAB}}^{l}},\\[-1pt]
&\hspace{-3.0cm}
0\le r_{j,\mathrm{ZAB}}^{l}(t)
\le y_{j,\mathrm{ZAB}}^{l}(t)R_{j,\mathrm{ZAB}}^{l},
\quad
0\le y_{j,\mathrm{ZAB}}^{l}(t)\le1 .
\end{aligned}
\label{eq:zab_model}
\end{equation}
where \(r_{j,\mathrm{ZAB}}^{l}(t)\) is the refueling-power rate,
\(R_{j,\mathrm{ZAB}}^{l}\) is its upper bound, and
\(y_{j,\mathrm{ZAB}}^{l}(t)\) is a continuous refueling-activation
variable. 

For the solid fuel battery, there is no electrical charging. Additional operational constraints are imposed as follows:
\begin{equation}
\resizebox{0.96\columnwidth}{!}{$
\begin{gathered}
\mathrm{SoC}_{j,\mathrm{SFB}}^{l}(t+1)
=
\mathrm{SoC}_{j,\mathrm{SFB}}^{l}(t)
-\frac{p_{j,\mathrm{SFB}}^{d,l}(t)\Delta t}
       {\eta_{\mathrm{dis}}E_{j,\mathrm{SFB}}^{l}}
+\frac{f_{j,\mathrm{SFB}}^{l}(t)}
       {E_{j,\mathrm{SFB}}^{l}},\\[4pt]
0\le f_{j,\mathrm{SFB}}^{l}(t)
\le y_{j,\mathrm{SFB}}^{l}(t)F_{j,\mathrm{SFB}}^{l},
\qquad
0\le y_{j,\mathrm{SFB}}^{l}(t)\le 1,
\end{gathered}
$}
\label{eq:sfb_model}
\end{equation}
where \(f_{j,\mathrm{SFB}}^{l}(t)\) is the fuel input,
\(F_{j,\mathrm{SFB}}^{l}\) is its upper bound, and
\(y_{j,\mathrm{SFB}}^{l}(t)\) is a continuous fuel-activation
variable. Electrical charging is excluded by imposing
\(p_{j,\mathrm{SFB}}^{c,l}(t)=0\). The continuous ZAB/SFB activation relaxations preserve the convex LP structure. 

3) Photovoltaic Systems: This representation is a modeling choice used
for evaluating the aggregation method rather than a detailed
inverter-dynamics model. The available PV power is represented as
\begin{equation}
\hat{p}_{i,\mathrm{PV}}^{t}
=
\mathrm{CF}_{i}^{t}P_{i,\mathrm{PV}}^{\mathrm{rated}},
\end{equation}
where $\mathrm{CF}_{i}^{t}$ and
$P_{i,\mathrm{PV}}^{\mathrm{rated}}$ denote the time-varying capacity
factor and rated capacity of PV unit $i$, respectively. Its operational
flexibility is modeled as
\begin{equation}
\resizebox{0.96\columnwidth}{!}{$
\begin{gathered}
p_{i,\mathrm{PV}}^{t}
=(1-\alpha_i^{t})\hat{p}_{i,\mathrm{PV}}^{t},
\qquad \alpha_i^{t}\in[0,1],\\[3pt]
\left|p_{i,\mathrm{PV}}^{t}-p_{i,\mathrm{PV}}^{t-1}\right|
\leq R_{i,\mathrm{PV}}^{\max},
\qquad
\underline{E}_{i,\mathrm{PV}}^{t}
\leq\displaystyle\sum_{\tau=1}^{t}
p_{i,\mathrm{PV}}^{\tau}\Delta t
\leq\overline{E}_{i,\mathrm{PV}}^{t}.
\end{gathered}
$}
\label{eq:pv_model}
\end{equation}
where $\alpha_i^{t}$ denotes the curtailment ratio,
$R_{i,\mathrm{PV}}^{\max}$ denotes the maximum ramp rate between
consecutive time steps, and $\underline{E}_{i,\mathrm{PV}}^{t}$ and
$\overline{E}_{i,\mathrm{PV}}^{t}$ denote the lower and upper bounds on
cumulative PV generation, respectively.

The preceding technology-specific models admit the following unified
convex representation:
\begin{equation}
\mathcal X_i=
\left\{
z_i\in\mathbb R^{n_i}
\;\middle|\;
A_i z_i\le b_i,\;
C_i z_i=d_i
\right\},
\label{eq:unified_device_set}
\end{equation}
where \(z_i\) collects the internal decision variables of DER \(i\),
\(A_i z_i\le b_i\) contains its operating limits, and
\(C_i z_i=d_i\) represents its intertemporal dynamics. Thus,
\(\mathcal X_i\) contains all feasible device trajectories under the
continuous convex model.
\vspace{-1em}
\subsection{Aggregate Power Flexibility Region}
\label{AAA}
Consider a portfolio \(\mathcal I\) of DERs over the scheduling
horizon \(\mathcal T:=\{1,\ldots,T\}\). Because the internal decision
vector \(\boldsymbol z_i\) is not generally identical to the delivered
power, the net-power trajectory of DER \(i\) is defined as
\begin{equation}
\boldsymbol p_i=H_i\boldsymbol z_i,
\qquad
H_i\in\mathbb R^{T\times n_i}.
\label{eq:output_mapping}
\end{equation}
In the absence of cross-DER coupling constraints, the aggregate
flexibility set is the Minkowski sum of the individual flexibility sets:
\begin{equation}
\mathcal P
=
\bigoplus_{i\in\mathcal I}\mathcal P_i
=
\left\{
\boldsymbol p\in\mathbb R^T
\,\middle|\,
\boldsymbol p=\sum_{i\in\mathcal I}\boldsymbol p_i,\;
\boldsymbol p_i\in\mathcal P_i
\right\}.
\label{eq:minkowski_sum}
\end{equation}
For large-scale DER fleets, however, constructing an exact
$H$- or $V$-representation of the Minkowski sum is generally
computationally intractable, as its complexity grows rapidly with the
dimension of the flexibility polytope and the number of DERs. This
challenge is further compounded by cross-DER coupling constraints,
under which the aggregate flexibility region can no longer be expressed
as a simple Minkowski sum of the individual sets.

To formulate the coupled region, define the stacked decision vector as
\(\boldsymbol z:=\operatorname{col}
(\boldsymbol z_i)_{i\in\mathcal I}\). Cross-DER coupling constraints,
including feeder, transformer, and portfolio-level limits, are expressed
as \(A_{\mathrm c}\boldsymbol z\leq\boldsymbol b_{\mathrm c}\).
The resulting feasible aggregate power-trajectory region is
\begin{equation}
\mathcal P:=
\left\{
\boldsymbol p\in\mathbb R^T
\,\middle|\,
\boldsymbol p=\sum_{i\in\mathcal I}H_i\boldsymbol z_i,\;
\boldsymbol z_i\in\mathcal X_i\ \forall i\in\mathcal I,\;
A_{\mathrm c}\boldsymbol z\leq\boldsymbol b_{\mathrm c}
\right\}.
\label{eq:aggregate_region}
\end{equation}
Thus, \(\mathcal P\) is the projection of all locally feasible DER
decisions satisfying the shared coupling constraints onto the aggregate
power space. For every \(i\in\mathcal I\), \(\mathcal X_i\) is defined in
\eqref{eq:unified_device_set}, and its power-flexibility set is the
linear image of \(\mathcal X_i\) under \(H_i\):
\begin{equation}
\mathcal P_i:=H_i\mathcal X_i
=
\left\{
\boldsymbol p_i\in\mathbb R^T
\,\middle|\,
\boldsymbol p_i=H_i\boldsymbol z_i,\;
\boldsymbol z_i\in\mathcal X_i
\right\}.
\label{eq:individual_flexibility}
\end{equation}

To illustrate the conversion of technology-specific constraints into
the unified matrix form, consider BESS \(j\), whose decision vector is
\begin{equation}
\boldsymbol z_{j,\mathrm{BS}}
:=
\begin{bmatrix}
(\boldsymbol p_{j,\mathrm{BS}}^{c})^{\mathsf T}&
(\boldsymbol p_{j,\mathrm{BS}}^{d})^{\mathsf T}&
\boldsymbol s_{j,\mathrm{BS}}^{\mathsf T}
\end{bmatrix}^{\mathsf T}
\in\mathbb R^{3T+1}.
\label{eq:bess_decision_vector}
\end{equation}
Here,
\(\boldsymbol p_{j,\mathrm{BS}}^{c},
\boldsymbol p_{j,\mathrm{BS}}^{d}\in\mathbb R^T\)
collect the charging and discharging powers, respectively, and
\(\boldsymbol s_{j,\mathrm{BS}}\in\mathbb R^{T+1}\) contains the dimensionless SoC trajectory, including the initial
state.

Its charging-power, discharging-power, and SoC limits satisfy
\(A_{j,\mathrm{BS}}\boldsymbol z_{j,\mathrm{BS}}
\leq\boldsymbol b_{j,\mathrm{BS}}\), where
\begin{equation}
\resizebox{0.98\columnwidth}{!}{$
A_{j,\mathrm{BS}}
=
\begin{bmatrix}
 I_T&0_T&0_{T\times(T+1)}\\
-I_T&0_T&0_{T\times(T+1)}\\
0_T&I_T&0_{T\times(T+1)}\\
0_T&-I_T&0_{T\times(T+1)}\\
0_{(T+1)\times T}&0_{(T+1)\times T}&I_{T+1}\\
0_{(T+1)\times T}&0_{(T+1)\times T}&-I_{T+1}
\end{bmatrix},
\quad
\boldsymbol b_{j,\mathrm{BS}}
=
\begin{bmatrix}
\overline{\boldsymbol p}_{j,\mathrm{BS}}^{c}\\
-\underline{\boldsymbol p}_{j,\mathrm{BS}}^{c}\\
\overline{\boldsymbol p}_{j,\mathrm{BS}}^{d}\\
-\underline{\boldsymbol p}_{j,\mathrm{BS}}^{d}\\
\overline{\boldsymbol s}_{j,\mathrm{BS}}\\
-\underline{\boldsymbol s}_{j,\mathrm{BS}}
\end{bmatrix}.
$}
\label{eq:bess_A_b}
\end{equation}
Here, \(I_T\) and \(0_T\) are the \(T\times T\) identity and zero
matrices, respectively, while the barred and underlined vectors collect
the time-dependent upper and lower bounds. Accordingly,
\(A_{j,\mathrm{BS}}\in\mathbb R^{(6T+2)\times(3T+1)}\) and
\(\boldsymbol b_{j,\mathrm{BS}}\in\mathbb R^{6T+2}\).

Define the temporal difference matrix as
\begin{equation}
D_T:=
\begin{bmatrix}
-1&1&0&\cdots&0\\
0&-1&1&\ddots&\vdots\\
\vdots&\ddots&\ddots&\ddots&0\\
0&\cdots&0&-1&1
\end{bmatrix}
\in\mathbb R^{T\times(T+1)}.
\label{eq:temporal_difference_matrix}
\end{equation}
Thus, \(D_T\boldsymbol s_{j,\mathrm{BS}}\) collects the consecutive
SoC differences. The normalized SoC dynamics and initial condition
satisfy
\(C_{j,\mathrm{BS}}\boldsymbol z_{j,\mathrm{BS}}
=\boldsymbol d_{j,\mathrm{BS}}\), where
\begin{equation}
\resizebox{0.98\columnwidth}{!}{$
C_{j,\mathrm{BS}}
=
\begin{bmatrix}
-\dfrac{\eta_{\mathrm{ch}}\Delta t}{E_{j,\mathrm{BS}}}I_T&
 \dfrac{\Delta t}{\eta_{\mathrm{dis}}E_{j,\mathrm{BS}}}I_T&
D_T\\[2mm]
0_{1\times T}&0_{1\times T}&\boldsymbol e_1^{\mathsf T}
\end{bmatrix},
\quad
\boldsymbol d_{j,\mathrm{BS}}
=
\begin{bmatrix}
\boldsymbol 0_T\\
\mathrm{SoC}_{j,\mathrm{BS}}^{\mathrm{init}}
\end{bmatrix}.
$}
\label{eq:bess_C_d}
\end{equation}
The first block row represents the \(T\) SoC-transition equations,
whereas
\(\boldsymbol e_1=[1,0,\ldots,0]^{\mathsf T}\in\mathbb R^{T+1}\)
fixes the initial SoC. Therefore,
\(C_{j,\mathrm{BS}}\in\mathbb R^{(T+1)\times(3T+1)}\) and
\(\boldsymbol d_{j,\mathrm{BS}}\in\mathbb R^{T+1}\).

Under the adopted sign convention, grid consumption is positive;
hence, charging and discharging contribute positively and negatively,
respectively, to the BESS net-power trajectory:
\begin{equation}
\boldsymbol p_{j,\mathrm{BS}}
=H_{j,\mathrm{BS}}\boldsymbol z_{j,\mathrm{BS}},
\qquad
H_{j,\mathrm{BS}}
=
\begin{bmatrix}
I_T&-I_T&0_{T\times(T+1)}
\end{bmatrix}.
\label{eq:bess_output_matrix}
\end{equation}

The EV, ZAB, SFB, and PV matrices are constructed analogously from
their technology-specific constraints. The ZAB and SFB decision vectors
include additional refueling or fuel-input variables, whereas the PV
formulation includes generation and curtailment variables. Their
coefficients are incorporated into \(A_i\), \(C_i\), and \(H_i\)
according to \eqref{eq:unified_device_set}.

The local constraints and output mappings are assembled into a
portfolio-level representation. Let
\(n:=\sum_{i\in\mathcal I}n_i\),
\(m:=\sum_{i\in\mathcal I}m_i\), and
\(q:=\sum_{i\in\mathcal I}q_i\), and define
\begin{equation}
\begin{aligned}
A&:=\operatorname{blkdiag}(A_i)_{i\in\mathcal I},
&\boldsymbol b&:=\operatorname{col}(\boldsymbol b_i)_{i\in\mathcal I},\\
C&:=\operatorname{blkdiag}(C_i)_{i\in\mathcal I},
&\boldsymbol d&:=\operatorname{col}(\boldsymbol d_i)_{i\in\mathcal I},\\
H&:=\begin{bmatrix}H_1&H_2&\cdots&H_{|\mathcal I|}\end{bmatrix}.
\end{aligned}
\label{eq:portfolio_matrices}
\end{equation}
Accordingly,
\(A\in\mathbb R^{m\times n}\),
\(C\in\mathbb R^{q\times n}\), and
\(H\in\mathbb R^{T\times n}\). For the previously defined
\(\boldsymbol z\in\mathbb R^n\), the aggregate trajectory is
$
\boldsymbol p
=\sum_{i\in\mathcal I}\boldsymbol p_i
=H\boldsymbol z
\label{eq:aggregate_mapping}
$.
Let \(m_{\mathrm c}\) denote the number of cross-DER coupling
constraints. Then
\(A_{\mathrm c}\in\mathbb R^{m_{\mathrm c}\times n}\) and
\(\boldsymbol b_{\mathrm c}\in\mathbb R^{m_{\mathrm c}}\), with
\begin{equation}
A_{\mathrm c}
=
\begin{bmatrix}
A_{\mathrm c,1}&A_{\mathrm c,2}&\cdots&
A_{\mathrm c,|\mathcal I|}
\end{bmatrix},
\quad
A_{\mathrm c,i}\in\mathbb R^{m_{\mathrm c}\times n_i}.
\label{eq:coupling_partition}
\end{equation}
Define
\(\overline A:=[A^{\mathsf T}\;
A_{\mathrm c}^{\mathsf T}]^{\mathsf T}\) and
\(\overline{\boldsymbol b}
:=[\boldsymbol b^{\mathsf T}\;
\boldsymbol b_{\mathrm c}^{\mathsf T}]^{\mathsf T}\).
Their detailed block structures are
\begin{equation}
\begingroup
\setlength{\arraycolsep}{3pt}
\renewcommand{\arraystretch}{0.9}
\resizebox{0.9\columnwidth}{!}{$\displaystyle
\overline A =
\begin{bmatrix}
A_1 & 0 & \cdots & 0\\
0 & A_2 & \cdots & 0\\
\vdots & \vdots & \ddots & \vdots\\
0 & 0 & \cdots & A_{|\mathcal I|}\\
A_{\mathrm c,1} & A_{\mathrm c,2} & \cdots
& A_{\mathrm c,|\mathcal I|}
\end{bmatrix},
\quad
\overline{\boldsymbol b} =
\begin{bmatrix}
\boldsymbol b_1\\
\boldsymbol b_2\\
\vdots\\
\boldsymbol b_{|\mathcal I|}\\
\boldsymbol b_{\mathrm c}
\end{bmatrix}.
$}
\endgroup
\label{eq:detailed_coupled_matrices}
\end{equation}
Thus,
\(\overline A\in\mathbb R^{(m+m_{\mathrm c})\times n}\) and
\(\overline{\boldsymbol b}\in\mathbb R^{m+m_{\mathrm c}}\).

The coupled aggregate flexibility set is consequently the projection
of the joint feasible region onto the aggregate power-trajectory space:
\begin{equation}
\mathcal P
=
\left\{
\boldsymbol p\in\mathbb R^T
\,\middle|\,
\boldsymbol p=H\boldsymbol z,\;
\overline A\boldsymbol z\leq\overline{\boldsymbol b},\;
C\boldsymbol z=\boldsymbol d
\right\}.
\label{eq:aggregate_flexibility}
\end{equation}
This set is not generally equal to the Minkowski sum of the individual
flexibility sets because the coupling constraints restrict the DERs
jointly. Moreover, constructing an exact half-space or vertex
representation of this high-dimensional projection can become
computationally prohibitive for large portfolios and long scheduling
horizons. This motivates a scalable approximation method that
characterizes the aggregate flexibility region without enumerating all its facets or vertices.
\vspace{-1em}
\subsection{Proposed Adaptive Directional Support Approximation}
\label{subsec:adaptive_support}
To solve the above existing computational challenge, the aggregate flexibility boundary is sampled through support LPs.
For a unit direction \(\boldsymbol\lambda\in\mathbb R^T\), solve
\begin{equation}
\begin{aligned}
h_{\mathcal P}(\boldsymbol\lambda)
:=\max_{\boldsymbol z}\;&
\boldsymbol\lambda^{\mathsf T}H\boldsymbol z\\[-1mm]
\mathrm{s.t.}\;&
\overline A\boldsymbol z\leq\overline{\boldsymbol b},
\quad C\boldsymbol z=\boldsymbol d,
\end{aligned}
\label{eq:support_lp}
\end{equation}
which returns the feasible boundary trajectory
\(\boldsymbol p^\star(\boldsymbol\lambda)
=H\boldsymbol z^\star(\boldsymbol\lambda)\).
Construction directions are sampled in batches as
\(\boldsymbol\lambda=\boldsymbol g/\|\boldsymbol g\|_2\), where
\(\boldsymbol g\sim\mathcal N(\boldsymbol 0,I_T)\).

Let \(\mathcal S_K\) contain the \(K\) retained boundary trajectories.
Using a separate validation set \(\Lambda_{\mathrm{val}}\), the maximum
normalized support deficit is
\begin{equation}
\varepsilon_K:=
\max_{\boldsymbol\mu\in\Lambda_{\mathrm{val}}}
\frac{
h_{\mathcal P}(\boldsymbol\mu)
-\max_{\boldsymbol p\in\mathcal S_K}
\boldsymbol\mu^{\mathsf T}\boldsymbol p
}{
\max\{|h_{\mathcal P}(\boldsymbol\mu)|,\epsilon_0\}
}.
\label{eq:validation_deficit}
\end{equation}
The true validation supports \(h_{\mathcal P}(\boldsymbol\mu)\) are
computed once and cached. The complete procedure is summarized in
Algorithm~\ref{alg:adaptive_support}. After termination, the retained support points are combined through a
convex-hull construction. The following lemma shows that this operation preserves feasibility and yields an inner approximation of the coupled aggregate region.

\begin{algorithm}[t]
\caption{Adaptive Directional Support Approximation}
\label{alg:adaptive_support}
\small
\begin{algorithmic}[1]
\Require
$\overline A,\overline{\boldsymbol b},C,\boldsymbol d,H$;
batch size $B$; budget $M_{\max}$;
tolerance $\epsilon_{\mathrm{tol}}$;
saturation threshold $\epsilon_{\mathrm{sat}}$;
patience parameter $L$;
validation set $\Lambda_{\mathrm{val}}$;
normalization rule
$\epsilon_0=\max\{10^{-8}\ {\rm kW},
10^{-6}\max_{\boldsymbol\mu\in\Lambda_{\mathrm{val}}}
|h_{\mathcal P}(\boldsymbol\mu)|\}$
\Ensure Feasible inner approximation
$\widehat{\mathcal P}_{\mathrm{in}}$

\State Compute and cache $h_{\mathcal P}(\boldsymbol\mu)$,
$\forall\boldsymbol\mu\in\Lambda_{\mathrm{val}}$
\State Compute $\epsilon_0\gets
\max\{10^{-8}\ {\rm kW},
10^{-6}\max_{\boldsymbol\mu\in\Lambda_{\mathrm{val}}}
|h_{\mathcal P}(\boldsymbol\mu)|\}$
\State $\mathcal S\gets\emptyset,\;
M\gets0,\;
\ell\gets0,\;
\varepsilon_{\mathrm{pre}}\gets+\infty$

\While{$M<M_{\max}$}
    \For{$b=1,\ldots,\min\{B,M_{\max}-M\}$}
        \State Sample $\boldsymbol g\sim\mathcal N(\boldsymbol0,I_T)$
        and set $\boldsymbol\lambda\gets
        \boldsymbol g/\|\boldsymbol g\|_2$
        \State Solve \eqref{eq:support_lp} and obtain
        $\boldsymbol p^\star=H\boldsymbol z^\star$
        \State $\mathcal S\gets
        \mathcal S\cup\{\boldsymbol p^\star\}$;
        $M\gets M+1$
    \EndFor

    \State Evaluate $\varepsilon_M$ using
    \eqref{eq:validation_deficit}

    \If{$\varepsilon_M\leq\epsilon_{\mathrm{tol}}$}
        \State \textbf{break}
    \EndIf

    \If{$|\varepsilon_{\mathrm{pre}}-\varepsilon_M|
        \leq\epsilon_{\mathrm{sat}}$}
        \State $\ell\gets\ell+1$
    \Else
        \State $\ell\gets0$
    \EndIf

    \If{$\ell\geq L$}
        \State \textbf{break}
    \EndIf

    \State $\varepsilon_{\mathrm{pre}}\gets\varepsilon_M$
\EndWhile

\State \Return $\widehat{\mathcal P}_{\mathrm{in}}
\gets\operatorname{conv}(\mathcal S)$
\end{algorithmic}
\end{algorithm}

\begin{lemma}[Feasibility of the inner approximation]
\label{lem:inner_feasibility}
Let $\mathcal P$ be the aggregate power-flexibility region defined
in~\eqref{eq:aggregate_flexibility}. For a finite direction set
$\Lambda_K=\{\boldsymbol\lambda_1,\ldots,\boldsymbol\lambda_K\}$,
let $\boldsymbol z^\star(\boldsymbol\lambda_k)$ be an optimal solution
of~\eqref{eq:support_lp}, and define
\begin{equation}
\boldsymbol p^\star(\boldsymbol\lambda_k)
:=H\boldsymbol z^\star(\boldsymbol\lambda_k),\qquad
\mathcal S_K:=
\{\boldsymbol p^\star(\boldsymbol\lambda_k)\}_{k=1}^{K}.
\label{eq:support_point_set}
\end{equation}
Then every retained support point belongs to $\mathcal P$, and
\begin{equation}
\widehat{\mathcal P}_{\mathrm{in}}
:=\operatorname{conv}(\mathcal S_K)\subseteq\mathcal P.
\label{eq:inner_approximation}
\end{equation}
Thus, the proposed method provides a feasible inner approximation of
the coupled aggregate power-flexibility region.
\end{lemma}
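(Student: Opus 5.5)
The plan is to argue in two stages: first that each retained support point lies in $\mathcal P$, and then that $\mathcal P$ is convex, so it contains the convex hull of any finite subset.

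\emph{Membership of support points.} Fix $k\in\{1,\ldots,K\}$. By hypothesis, $\boldsymbol z^\star(\boldsymbol\lambda_k)$ is an optimal solution of~\eqref{eq:support_lp}, so in particular it is feasible:
\begin{equation*}
\overline A\boldsymbol z^\star(\boldsymbol\lambda_k)\le\overline{\boldsymbol b},\qquad C\boldsymbol z^\star(\boldsymbol\lambda_k)=\boldsymbol d.
\end{equation*}
By~\eqref{eq:detailed_coupled_matrices}, the first inequality contains both the device-level limits $A_i\boldsymbol z_i\le\boldsymbol b_i$ and the coupling limits $A_{\mathrm c}\boldsymbol z\le\boldsymbol b_{\mathrm c}$. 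Hence $\boldsymbol z^\star(\boldsymbol\lambda_k)$ is admissible in the description~\eqref{eq:aggregate_flexibility}, and its image $\boldsymbol p^\star(\boldsymbol\lambda_k)=H\boldsymbol z^\star(\boldsymbol\lambda_k)$ belongs to $\mathcal P$. Optimality is never used; feasibility of the returned solution is all that matters.

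\emph{Convexity of $\mathcal P$.} Let $\boldsymbol p_1,\ldots,\boldsymbol p_K\in\mathcal P$ with witnesses $\boldsymbol z_1,\ldots,\boldsymbol z_K$, and take weights $\theta_k\ge0$ with $\sum_k\theta_k=1$. Set $\bar{\boldsymbol z}:=\sum_k\theta_k\boldsymbol z_k$. Because $\theta_k\ge0$, summing the weighted inequalities gives $\overline A\bar{\boldsymbol z}\le\sum_k\theta_k\overline{\boldsymbol b}=\overline{\boldsymbol b}$. Because $\sum_k\theta_k=1$, we also get $C\bar{\boldsymbol z}=\boldsymbol d$. By linearity of $H$,
\begin{equation*}
H\bar{\boldsymbol z}=\sum_k\theta_k\boldsymbol p_k .
\end{equation*}
Thus every convex combination of points of $\mathcal P$ lies in $\mathcal P$. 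Geometrically, $\mathcal P$ is the linear image of a polyhedron, and convexity is preserved under linear maps.

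\emph{Conclusion.} Every element of $\operatorname{conv}(\mathcal S_K)$ is a finite convex combination of the support points. Applying the convexity step with the witnesses $\boldsymbol z^\star(\boldsymbol\lambda_k)$ gives~\eqref{eq:inner_approximation}.

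The argument has no genuine obstacle. The only point needing care is that~\eqref{eq:support_lp} must attain its optimum, so that $\boldsymbol z^\star(\boldsymbol\lambda_k)$ exists. The lemma already assumes an optimal solution, so existence is not required for the proof. I would still remark that the feasible polyhedron is nonempty and bounded under the physical box limits. Finally, the proof makes clear that feasibility rests only on the continuous convex (LP-relaxed) model adopted for ZAB and SFB, since convexity of the joint feasible set is exactly what the convexity step uses.
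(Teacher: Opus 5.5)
Your proposal is correct and follows essentially the same route as the paper. Feasibility of each optimizer $\boldsymbol z^\star(\boldsymbol\lambda_k)$ places $\boldsymbol p^\star(\boldsymbol\lambda_k)$ in $\mathcal P$, and by linearity the same convex combination $\sum_k\theta_k\boldsymbol z^\star(\boldsymbol\lambda_k)$ serves as a witness for any point of $\operatorname{conv}(\mathcal S_K)$; your remark that only feasibility, not optimality, of the returned solutions is used applies equally to the paper's argument.
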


\begin{proof}
For every $\boldsymbol\lambda_k\in\Lambda_K$, the optimizer satisfies
\begin{equation}
\overline A\boldsymbol z^\star(\boldsymbol\lambda_k)
\leq\overline{\boldsymbol b},\qquad
C\boldsymbol z^\star(\boldsymbol\lambda_k)=\boldsymbol d,
\label{eq:support_point_feasibility}
\end{equation}
and hence
$\boldsymbol p^\star(\boldsymbol\lambda_k)
=H\boldsymbol z^\star(\boldsymbol\lambda_k)\in\mathcal P$.
For any
$\widehat{\boldsymbol p}\in\operatorname{conv}(\mathcal S_K)$,
there exist $\theta_k\geq0$ with $\sum_{k=1}^{K}\theta_k=1$ such that
\begin{equation}
\widehat{\boldsymbol p}
=\sum_{k=1}^{K}\theta_k
\boldsymbol p^\star(\boldsymbol\lambda_k),\qquad
\widehat{\boldsymbol z}
:=\sum_{k=1}^{K}\theta_k
\boldsymbol z^\star(\boldsymbol\lambda_k).
\label{eq:convex_combination}
\end{equation}
By linearity,
\begin{equation}
H\widehat{\boldsymbol z}=\widehat{\boldsymbol p},\qquad
\overline A\widehat{\boldsymbol z}
\leq\overline{\boldsymbol b},\qquad
C\widehat{\boldsymbol z}=\boldsymbol d.
\label{eq:convex_combination_feasibility}
\end{equation}
Hence, $\widehat{\boldsymbol p}\in\mathcal P$, proving
$\operatorname{conv}(\mathcal S_K)\subseteq\mathcal P$.
\end{proof}

\section{Theoretical Analysis of Computational Complexity}
\label{sec:complexity}

\begingroup
\setlength{\abovedisplayskip}{4pt}
\setlength{\belowdisplayskip}{4pt}
\setlength{\abovedisplayshortskip}{3pt}
\setlength{\belowdisplayshortskip}{3pt}
\setlength{\jot}{2pt}

This section provides a theoretically grounded analysis of the principal
factors affecting the computational complexity of the proposed
aggregation method. The computational burden of the principal components
is examined, and an expression for the overall complexity is derived.
According to the proposed models, the number of decision variables and
the quantity of equality and inequality constraints play critical roles
in determining the computational burden. The proposed method solves a
sequence of LPs, one for each sampled projection direction. The
computational cost also depends on the algorithm and implementation used
by the selected solver.

Based on the literature and numerical experience, an interior-point
method (IPM) is commonly used to solve large-scale sparse LP
problems~\cite{andersen1996implementation}. The computational cost of an
IPM depends on two principal components: the number of Newton iterations
required for convergence and the computational effort per iteration.
Under a short-step path-following scheme, the number of Newton iterations
required to attain an optimality tolerance $\varepsilon$ satisfies
$K_{\mathrm{IPM}}
=\mathcal{O}\!\left(
\sqrt{\vartheta}\log(\vartheta/\varepsilon)
\right)$, where $K_{\mathrm{IPM}}$, $\vartheta$, and $\varepsilon$
denote the number of IPM iterations, the barrier parameter, and the
optimality tolerance, respectively. Therefore, the computational burden
depends on the barrier parameter and the prescribed solution accuracy.
We formalize this result in Lemma~\ref{lemma2}.

For the second part, the computational complexity per iteration of the
IPM is primarily determined by the solution of the Newton system, which
includes factoring the matrix, solving equations, computing, and
updating variables. The cost of solving this linear system depends
critically on the structure and sparsity of the constraint matrix $A$.
Based on Lemma~\ref{lemma3}, the Newton-step cost depends on the nonzeros
and fill-in of the KKT or normal-equation factorization, which itself
depends on the sparsity pattern of the matrix $A$. If $A$ is a dense
matrix, the complexity is a quadratic formula
\textbf{($\mathcal{O}(\operatorname{nnz}(A))$)} related to the number
of nonzero entries. The more sparse $A$ is, the less computational time
is needed~\cite{luce2014minimum}.

In this paper, the matrix $A$ from the proposed LP-based flexibility
projection model exhibits a high degree of sparsity because only a small
subset of decision variables is involved in each constraint of the
model. For instance, at each time step, the power balance constraints
are only related to specific time variables, while operational bounds
and charging limits introduce mostly diagonal or block-diagonal
constraints. Moreover, aggregator-level coupling constraints and SoC
dynamic constraints also affect only a limited subset of variables, not
all variables simultaneously. As a result, the majority of entries in
$A$ are zero, and only a few nonzero elements appear per row or column.
Therefore, the computational complexity of each iteration and the total
complexity can be represented as
\begin{equation}
\mathcal{O}\!\left(\operatorname{nnz}(A)\right) = \mathcal{O}(n_z),
\qquad
\mathcal{C}_{\mathrm{tot}} = M K_{\mathrm{IPM}}
\mathcal{O}\!\left(\operatorname{nnz}(A)\right).
\label{eq:total1}
\end{equation}

\begin{lemma}
\label{lemma2}
Under a short-step path-following scheme using a logarithmic barrier,
the number of Newton steps $K_{\mathrm{IPM}}$ required to achieve a
duality gap
$\mu(t_{K_{\mathrm{IPM}}})\leq\varepsilon$ satisfies
\begin{equation}
K_{\mathrm{IPM}}
\leq C_1\sqrt{\vartheta}
\log\!\left(\frac{\mu(t_0)}{\varepsilon}\right),
\label{eq:lemma2_bound}
\end{equation}
where $C_1>0$ depends only on the selected short-step safeguard. For
the standard logarithmic barrier associated with $m_{\mathrm{in}}$
inequality constraints,
\begin{equation}
K_{\mathrm{IPM}}
=\mathcal{O}\!\left(
\sqrt{m_{\mathrm{in}}}
\log\!\frac{m_{\mathrm{in}}}{\varepsilon}
\right).
\label{eq:lemma2_asymptotic}
\end{equation}
\end{lemma}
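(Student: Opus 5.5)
The plan is to reproduce the standard short-step path-following analysis for a $\vartheta$-self-concordant barrier, in the style of Nesterov--Nemirovski and Renegar. I would first rewrite the support LP~\eqref{eq:support_lp} in barrier form. The equality constraints $C\boldsymbol z=\boldsymbol d$ are kept explicitly, so Newton steps are taken on the affine subspace they define. The inequality constraints $\overline A\boldsymbol z\leq\overline{\boldsymbol b}$ are handled by the logarithmic barrier
\[
F(\boldsymbol z):=-\sum_{r=1}^{m_{\mathrm{in}}}\log\bigl(\overline b_r-\overline{\boldsymbol a}_r^{\mathsf T}\boldsymbol z\bigr),
\]
with $m_{\mathrm{in}}=m+m_{\mathrm c}$. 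Each summand is a $1$-self-concordant barrier for a half-space, so $F$ is self-concordant with parameter $\vartheta=m_{\mathrm{in}}$. The central path is then $\boldsymbol z(t)=\arg\min\{-t\boldsymbol\lambda^{\mathsf T}H\boldsymbol z+F(\boldsymbol z): C\boldsymbol z=\boldsymbol d\}$. Its duality gap obeys the standard identity $\mu(t)\leq\vartheta/t$, obtained from the optimality conditions and $\nabla F(\boldsymbol z)^{\mathsf T}(\boldsymbol y-\boldsymbol z)\leq\vartheta$.

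The core step is the short-step invariant. Suppose the iterate $\boldsymbol z_k$ is in the Newton-decrement neighbourhood $\delta_{t_k}(\boldsymbol z_k)\leq\beta$ of $\boldsymbol z(t_k)$, with fixed $\beta$, say $1/9$. Update $t_{k+1}=(1+\gamma/\sqrt{\vartheta})\,t_k$, with $\gamma$ a small constant, say $1/5$. Then:
\begin{itemize}
\item By self-concordance, the decrement measured at the new parameter satisfies $\delta_{t_{k+1}}(\boldsymbol z_k)\leq\beta+\gamma(1+\beta)$.
\item One damped or full Newton step on the restricted subspace gives $\delta_{t_{k+1}}(\boldsymbol z_{k+1})\leq(\delta_{t_{k+1}}(\boldsymbol z_k)/(1-\delta_{t_{k+1}}(\boldsymbol z_k)))^2$.
\item For the chosen $\beta,\gamma$ this quadratic convergence bound returns the iterate to $\delta\leq\beta$.
\end{itemize}
I expect this to be the main technical obstacle. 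It requires getting the constants right and checking that the Newton step restricted to $\ker C$ inherits the local-norm estimates; the latter follows because restriction to an affine subspace preserves self-concordance.

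Given the invariant, $t_k=(1+\gamma/\sqrt\vartheta)^k t_0$. Near the central path the gap satisfies $\mu(t_k)\leq c\,\vartheta/t_k$, with $c$ depending only on $\beta$, so $\mu(t_k)\leq(1+\gamma/\sqrt\vartheta)^{-k}\mu(t_0)$ up to that constant. Using $\log(1+x)\geq x/2$ for $x\in(0,1]$, the condition $\mu(t_k)\leq\varepsilon$ holds once
\[
k\geq\frac{2\sqrt\vartheta}{\gamma}\log\frac{\mu(t_0)}{\varepsilon},
\]
which gives~\eqref{eq:lemma2_bound} with $C_1=2/\gamma$ plus a constant absorbed from the neighbourhood. $C_1$ depends only on the safeguard pair $(\beta,\gamma)$.

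For~\eqref{eq:lemma2_asymptotic}, set $\vartheta=m_{\mathrm{in}}$. Initialize on or near the analytic-center path with $t_0$ a fixed constant; this can be done via a standard phase-I or self-dual embedding, whose cost is of the same order and which I would simply cite. Then $\mu(t_0)=\mathcal O(\vartheta/t_0)=\mathcal O(m_{\mathrm{in}})$, and substitution yields $K_{\mathrm{IPM}}=\mathcal O(\sqrt{m_{\mathrm{in}}}\log(m_{\mathrm{in}}/\varepsilon))$.
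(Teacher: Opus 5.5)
Your proposal follows the same route as the paper's proof. You update $t_{k+1}=(1+\gamma/\sqrt{\vartheta})\,t_k$, bound the gap by a multiple of $\vartheta/t$, read off geometric decay, and set $\vartheta=m_{\mathrm{in}}$. You go beyond the paper in three places:
\begin{itemize}
\item you derive the neighbourhood invariant, which the paper only asserts by citation;
\item you handle iterates near, not on, the central path via $\mu\le c(\beta)\vartheta/t$;
\item you make explicit the normalization $t_0=\Theta(1)$, which the paper uses silently to replace $\log(\mu(t_0)/\varepsilon)$ by $\log(m_{\mathrm{in}}/\varepsilon)$.
\end{itemize}
Your inequality $\log(1+x)\ge x/2$ on $(0,1]$ is the correct one. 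The paper's step $\mu(t_k)/(1+\sigma/\sqrt{\vartheta})\le e^{-\sigma/\sqrt{\vartheta}}\mu(t_k)$ is reversed, since $1+x\le e^{x}$ gives $1/(1+x)\ge e^{-x}$. The conclusion still holds, but with a constant like your $C_1=2/\gamma$ rather than $1/\sigma$.

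There is, however, a concrete error in the step you flagged as the main obstacle. With $\beta=1/9$ and $\gamma=1/5$, your own estimate gives $\delta_{t_{k+1}}(\boldsymbol z_k)\le 1/9+(1/5)(10/9)=1/3$. The Newton-step bound then yields $\bigl((1/3)/(2/3)\bigr)^2=1/4>1/9$, so the invariant does not close. Under these two estimates, no choice of $\beta$ rescues $\gamma=1/5$.

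You need a smaller step in $t$. With $\gamma=1/8$ (the pairing in Renegar's analysis), you get $1/9+(1/8)(10/9)=1/4$ and $\bigl((1/4)/(3/4)\bigr)^2=1/9$ exactly. State this for the full Newton step, where $(\delta/(1-\delta))^2$ is the standard estimate. A damped step satisfies only a weaker estimate, which does not close at $1/9$ with these constants.

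Finally, you absorb the additive term $(2\sqrt{\vartheta}/\gamma)\log c(\beta)$, and the rounding of $K_{\mathrm{IPM}}$ to an integer, into the multiplicative constant $C_1$. This requires $\mu(t_0)/\varepsilon$ to be bounded away from $1$, say $\mu(t_0)/\varepsilon\ge e$. The condition is harmless, but you should state it.
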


\begin{proof}
For a logarithmic barrier with parameter $\vartheta$, let $t_k$ denote
the path-following parameter at iteration $k$. Under the short-step
scheme, the parameter is updated according to
\begin{equation}
t_{k+1}
=\left(1+\frac{\sigma}{\sqrt{\vartheta}}\right)t_k,
\label{eq:path_parameter_update}
\end{equation}
where $\sigma\in(0,1)$ is selected such that a bounded number of Newton
corrections restores the prescribed central-neighborhood condition
\cite{nesterov1994interior,wright1997primal}.

Since the duality gap along the central path satisfies
$\mu(t)=\vartheta/t$, the update in
\eqref{eq:path_parameter_update} yields
\begin{equation}
\begin{aligned}
\mu(t_{k+1})
=\frac{\vartheta}{t_{k+1}}
=\frac{\mu(t_k)}{1+\sigma/\sqrt{\vartheta}}
\leq
\exp\!\left(-\frac{\sigma}{\sqrt{\vartheta}}\right)
\mu(t_k).
\end{aligned}
\label{eq:duality_gap}
\end{equation}
After $K_{\mathrm{IPM}}$ iterations,
\begin{equation}
\mu(t_{K_{\mathrm{IPM}}})
\leq
\mu(t_0)
\exp\!\left(
-\frac{\sigma K_{\mathrm{IPM}}}{\sqrt{\vartheta}}
\right).
\label{eq:duality_gap_iterations}
\end{equation}
Therefore, the condition
$\mu(t_{K_{\mathrm{IPM}}})\leq\varepsilon$ is satisfied when
\begin{equation}
K_{\mathrm{IPM}}
\geq
\frac{\sqrt{\vartheta}}{\sigma}
\log\!\left(\frac{\mu(t_0)}{\varepsilon}\right).
\label{eq:required_ipm_iterations}
\end{equation}
Absorbing $\sigma^{-1}$ and the bounded number of safeguard corrections
into $C_1$ gives \eqref{eq:lemma2_bound}. For a logarithmic barrier over
$m_{\mathrm{in}}$ inequality slacks,
$\vartheta=\mathcal{O}(m_{\mathrm{in}})$, which gives
\eqref{eq:lemma2_asymptotic}.
\end{proof}

The result in Lemma~\ref{lemma2} characterizes the number of IPM
iterations. The computational cost of each iteration is determined by
the Newton system, as described next.

\begin{lemma}
\label{lemma3}
Suppose that the device model is local in time, such that each
time-indexed device variable interacts with only
$\mathcal{O}(1)$ neighboring intervals, and that the cross-DER
constraints remain sparse. Let $\mathcal{K}_k$ denote the Newton-system
matrix at iteration $k$. If a fill-reducing ordering yields a uniformly
bounded elimination width, then each Newton step can be computed in
$\mathcal{O}(NT)$ time. For a fixed scheduling horizon, the
per-iteration complexity reduces to $\mathcal{O}(N)$.
\end{lemma}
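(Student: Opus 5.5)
The plan is to make the Newton system explicit, show that its sparsity pattern is a banded (in time) graph joined to a bounded number of coupling rows, and then invoke the standard cost bound for sparse Cholesky/LDL$^{\mathsf T}$ factorization under a bounded-width elimination ordering. For the LP \eqref{eq:support_lp} with slacks for $\overline A\boldsymbol z\le\overline{\boldsymbol b}$, a primal--dual IPM step solves the augmented KKT system
\begin{equation}
\mathcal K_k=
\begin{bmatrix}
\overline A^{\mathsf T}\Theta_k\overline A & C^{\mathsf T}\\
C & 0
\end{bmatrix},
\qquad \Theta_k\succ 0 \text{ diagonal},
\end{equation}
or the corresponding normal equations. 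The first step is to count sizes. Each device model in Section~\ref{AA} has $n_i=\mathcal O(T)$ variables, and $m_i,q_i=\mathcal O(T)$ rows, each with $\mathcal O(1)$ nonzeros. Examples are \eqref{eq:bess_A_b}, \eqref{eq:bess_C_d}, and the ramp and cumulative-energy rows of \eqref{eq:pv_model}; the cumulative sums are rewritten through auxiliary energy states with one-step recursions, exactly as $D_T$ does for SoC. Hence $n,m,q=\mathcal O(NT)$ and $\operatorname{nnz}(\overline A)+\operatorname{nnz}(C)=\mathcal O(NT)$, given the sparsity assumption on $A_{\mathrm c}$.

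The second step is to bound the structure of $\mathcal K_k$. Because $A$ and $C$ are block diagonal across devices and temporally local within each device, the graph of $\mathcal K_k$ restricted to device $i$ is a path-like band over $t$ of width $\mathcal O(1)$. The cross-DER coupling rows are the only links between devices.

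The third step applies the hypothesis. Take a fill-reducing ordering of elimination width $w=\mathcal O(1)$, for instance eliminating each device's variables in time order and the coupling-related unknowns last. The numerical factorization then costs $\mathcal O(\sum_j w^2)=\mathcal O(w^2\cdot\dim\mathcal K_k)=\mathcal O(NT)$. The triangular solves cost $\mathcal O(w\cdot\dim\mathcal K_k)=\mathcal O(NT)$. Forming $\overline A^{\mathsf T}\Theta_k\overline A$, the residuals, and the step-length and variable updates each cost $\mathcal O(\operatorname{nnz})=\mathcal O(NT)$. Summing gives $\mathcal O(NT)$ per Newton step, and fixing $T$ gives $\mathcal O(N)$.

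The main obstacle is the coupling rows. A portfolio-level constraint such as $\sum_i p_i^t\le\overline P^t$ touches all $N$ devices. If it is eliminated naively, through the term $A_{\mathrm c}^{\mathsf T}\Theta A_{\mathrm c}$ in the normal equations, it creates a dense $N$-clique at each $t$. I would handle this in the following order.
\begin{enumerate}
\item Work with the augmented (non-normal-equation) form and keep the coupling multipliers as separate unknowns, so the graph becomes an arrowhead/star.
\item Order all device blocks first and the $\mathcal O(T)$ coupling unknowns last. The width per device stays $\mathcal O(1)$.
\item Bound the final Schur complement on the coupling unknowns. It is dense but of size $m_{\mathrm c}=\mathcal O(T)$, and it is assembled in $\mathcal O(NT)$ operations when each device touches $\mathcal O(1)$ coupling rows.
\end{enumerate}
This gives $\mathcal O(NT+T^3)$. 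Under the lemma's stated hypothesis of uniformly bounded elimination width this collapses to $\mathcal O(NT)$, and for fixed $T$ it is $\mathcal O(N)$ in any case. The delicate point is making precise that "bounded width" is exactly the hypothesis that excludes the $T^3$ term, that is, that the coupling is sparse in time. I would state this explicitly rather than claim it holds in general.
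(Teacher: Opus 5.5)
Your core argument is the paper's proof. You count $n_{\mathcal K}=\mathcal O(NT)$ from the per-interval locality of each device model, then apply the bounded-width estimates $\sum_j\ell_j^2\le w^2 n_{\mathcal K}$ and $\operatorname{nnz}(L)\le w\,n_{\mathcal K}$ to the sparse $LDL^{\mathsf T}$ factorization and triangular solves, and this is correct under the lemma's hypothesis. Your extra coupling-row discussion is not needed and is inaccurate for the example you cite. With per-time limits $\sum_i p_i^t\le\overline P^t$, every device touches all $\Theta(T)$ coupling unknowns, so eliminating its temporally connected block before them leaves at least $T$ nonzeros in the last-eliminated column (width $\ge T$, not $\mathcal O(1)$) and a dense $T\times T$ Schur contribution per device, giving a cost with an $NT^2$ term rather than $\mathcal O(NT+T^3)$; this only reinforces your point that bounded width is a genuine extra assumption.
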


\begin{proof}
For a given projection direction $\lambda$, the primal support problem
can be written as
\begin{equation}
\begin{aligned}
\max_{\boldsymbol z,\boldsymbol s}\quad
&\lambda^{\top}H\boldsymbol z\\
\mathrm{s.t.}\quad
&\bar A\boldsymbol z+\boldsymbol s=\bar{\boldsymbol b},
\quad C\boldsymbol z=\boldsymbol d,
\quad \boldsymbol s\geq\boldsymbol 0,
\end{aligned}
\label{eq:complexity_primal}
\end{equation}
where $\boldsymbol s$ is the inequality-slack vector. The corresponding
dual problem is
\begin{equation}
\begin{aligned}
\min_{\boldsymbol y,\boldsymbol\xi}\quad
&\bar{\boldsymbol b}^{\top}\boldsymbol y
+\boldsymbol d^{\top}\boldsymbol\xi\\
\mathrm{s.t.}\quad
&\bar A^{\top}\boldsymbol y+C^{\top}\boldsymbol\xi
=H^{\top}\lambda,
\quad \boldsymbol y\geq\boldsymbol 0,
\end{aligned}
\label{eq:complexity_dual}
\end{equation}
where $\boldsymbol y$ and $\boldsymbol\xi$ are the inequality- and
equality-constraint multipliers, respectively. Define:
\begin{equation}
Y:=\operatorname{diag}(\boldsymbol y),
\qquad
\mathcal S:=\operatorname{diag}(\boldsymbol s).
\label{eq:diagonal_matrices}
\end{equation}
The perturbed KKT conditions are
\begin{equation}
\begin{aligned}
\bar A\boldsymbol z+\boldsymbol s
&=\bar{\boldsymbol b},&
C\boldsymbol z&=\boldsymbol d,\\
\bar A^{\top}\boldsymbol y+C^{\top}\boldsymbol\xi
&=H^{\top}\lambda,&
Y\boldsymbol s&=\mu\boldsymbol 1.
\end{aligned}
\label{eq:perturbed_kkt}
\end{equation}
where $\mu>0$ is the complementarity parameter.

For the current iterate
$(\boldsymbol z,\boldsymbol s,\boldsymbol y,\boldsymbol\xi)$, define
the residuals as
\begin{equation}
{\small
\begin{alignedat}{2}
\boldsymbol r_p
&:=\bar{\boldsymbol b}-\bar A\boldsymbol z-\boldsymbol s,
&\qquad
\boldsymbol r_e
&:=\boldsymbol d-C\boldsymbol z,\\
\boldsymbol r_d
&:=H^{\top}\lambda-\bar A^{\top}\boldsymbol y
-C^{\top}\boldsymbol\xi,
&\qquad
\boldsymbol r_\mu
&:=\mu\boldsymbol 1-Y\boldsymbol s.
\end{alignedat}
}
\label{eq:kkt_residuals}
\end{equation}
The Newton corrections
$(\Delta\boldsymbol z,\Delta\boldsymbol s,
\Delta\boldsymbol y,\Delta\boldsymbol\xi)$
satisfy
\begin{equation}
{\small
\begin{alignedat}{2}
\bar A\Delta\boldsymbol z+\Delta\boldsymbol s
&=\boldsymbol r_p,
&\qquad
C\Delta\boldsymbol z
&=\boldsymbol r_e,\\
-\bar A^{\top}\Delta\boldsymbol y
-C^{\top}\Delta\boldsymbol\xi
&=\boldsymbol r_d,
&\qquad
Y\Delta\boldsymbol s+\mathcal S\Delta\boldsymbol y
&=\boldsymbol r_\mu.
\end{alignedat}
}
\label{eq:newton_equations}
\end{equation}

Rather than assuming that $C$ is invertible, the coupled Newton
equations are solved directly or reduced through a valid block
elimination to $\mathcal K_k\Delta\boldsymbol u_k=\boldsymbol r_k,
\label{eq:newton_system}$
where $\mathcal K_k$ is the sparse KKT or normal-equation matrix. Let $P^{\top}\mathcal K_kP=LDL^{\top}
\label{eq:ldl_factorization}$
denotes a sparse symmetric factorization under a fill-reducing
permutation $P$. If $\ell_j$ denotes the number of nonzero entries in
column $j$ of $L$, the factorization cost is bounded by $\mathcal{O}\!\left(
\sum_{j=1}^{n_{\mathcal K}}\ell_j^2
\right),
\label{eq:factorization_cost}$, where $n_{\mathcal K}:=\dim(\mathcal K_k)$. The subsequent triangular
solutions require
$\mathcal{O}(\operatorname{nnz}(L))$ operations.

Let $w:=\max_j\ell_j$ denote the elimination width. Then
\begin{equation}
\sum_{j=1}^{n_{\mathcal K}}\ell_j^2
\leq w^2n_{\mathcal K},
\qquad
\operatorname{nnz}(L)\leq wn_{\mathcal K}.
\label{eq:bounded_fill}
\end{equation}
When the elimination width satisfies $w=\mathcal{O}(1)$ and each DER
contributes a fixed number of variables and constraints per interval,
$n_{\mathcal K}=\mathcal{O}(NT)$. Therefore, the factorization and
triangular-solution costs are $\mathcal{O}(NT)$. For fixed $T$, this
reduces to $\mathcal{O}(N)$, which proves
Lemma~\ref{lemma3}.
\end{proof}

\endgroup

\section{Numerical Simulation}
In this section, two cases are proposed to verify and evaluate the performance of the LP-based projection method. One is a small case including 20 DER devices 
aggregation at 6 time steps to illustrate the accuracy and validity. Another is a large-scale case, which involves a large-scale DER and sensitive analysis, indicating the proposed approximation method's performance in terms of computability and applicability. 
To better reflect real-world conditions and demonstrate the generality of the proposed method, DERs are modeled as heterogeneous. 

\begin{table*}[t]
\centering
\caption{Large-scale Simulation settings}
\scriptsize
\vspace{-1em}
\begin{tabular}{ccccccccccc}
\hline
\makecell{DERs} & \makecell{Maximum \\ Charging (kw)} & \makecell{Maximum \\ Discharging (kw)} & \makecell{Minimum\\ Energy (kWh)} & \makecell{Maximum \\ Energy (kWh)} &
\makecell{Max. Refuel\\ Power (kw)} &
\makecell{Max. Fuel \\ Added (kwh)} & 
\makecell{Max. Peak\\ Power (kw)} &
\makecell{Max. Ramp\\ Rate (kw)} \\
\hline
EVs (400) & $\mathcal{U}(5, 10)$ & $\mathcal{U}(5, 10)$ & 10 & 30 & - & - & - & -
\\
\hline
LIB (250) & $\mathcal{U}(5, 10)$ & $\mathcal{U}(5, 10)$ & - & - & - & -& - & -\\
\hline
ZAB (150) & $\mathcal{U}(5, 10)$ & $\mathcal{U}(5, 10)$ & - & - & $\mathcal{U}(3, 8)$ & - & - & - \\
\hline
SFB (100) & - & $\mathcal{U}(5, 10)$ & - & - & - & $\mathcal{U}(3, 10)$ & - & - \\
\hline
PVs (100) & - & - & - & - & - & - & $\mathcal{U}(6, 16)$ & $\mathcal{U}\!\left(0.3P_{\text{peak}},\,0.5P_{\text{peak}}\right)$ \\
\hline
\end{tabular}
\label{Fig:largesimulation}
\end{table*}
\vspace{-1em}
\subsection{Small-scale Simulation Case}
The small-scale portfolio comprises 20 DERs: 8 EVs, 5 LIBs,
3 ZABs, 2 SFBs, and 2 PVs, operated over $T=6$ time steps
with $\Delta t=1$~h. Charging and discharging power limits
for applicable storage devices are uniformly sampled from
$[5,10]$~kW, with efficiencies $\eta_{ch}=\eta_{dis}=0.95$.
Non-PV energy requirements are randomly assigned within
$[10,30]$~kWh. Refueling power and fuel capacity limits are
uniformly sampled from $[3,8]$~kW and $[3,10]$~kWh,
respectively. PV ramp limits are uniformly sampled between
$0.3$ and $0.5$ times the PV peak power. The grid power
and aggregate energy limits are $25$~kW and $40$~kWh,
respectively, and the maximum direction budget is
$M_{\max}=96$.
\vspace{-1em}
\subsection{A Large-scale Simulation Case}
The large-scale case evaluates the scalability of the proposed LP-based directional support framework for up to $2000$ DERs. Device-level parameters follow the small-scale case and are summarized in Table~II. The horizon remains $T=6$ with $\Delta t=1$~h, while the large-scale feeder power and aggregate energy limits are $1000$~kW and $3750$~kWh, respectively. Runtime experiments cover selected portfolio sizes from $N=20$ to $2000$ and direction budgets from $M=1$ to $256$ to assess model size, construction time, per-direction solve time, and total runtime. These experiments complement the accuracy analysis and assess empirical scaling.

\begin{figure*}[t]
\centering
\subfigure[\textbf{t=1, N=20}]
{
    \begin{minipage}[b]{.31\linewidth}
        \centering
        \label{poweraggregation}\includegraphics[scale=0.24]{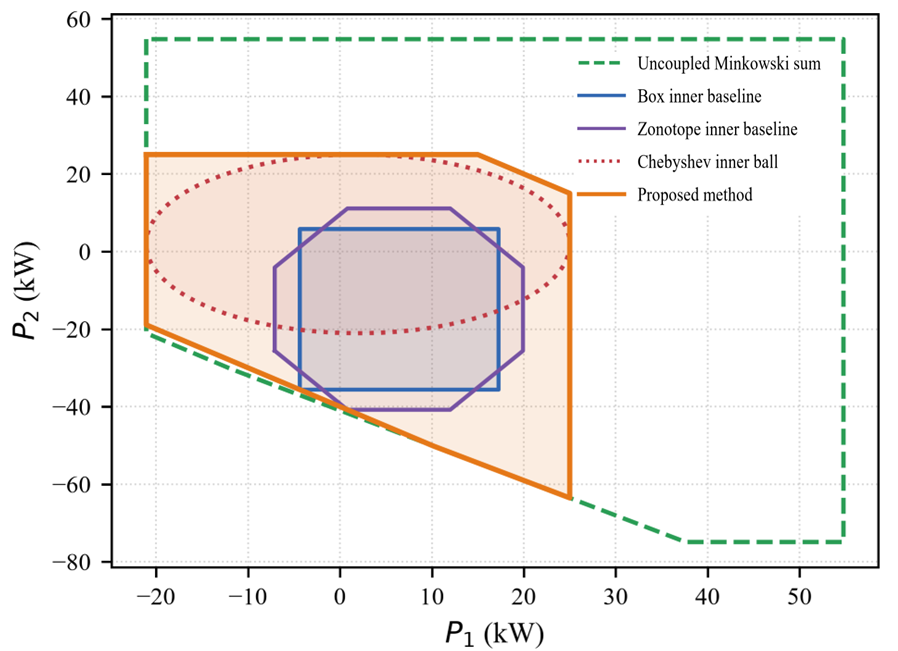}
    \end{minipage}
}
\subfigure[\textbf{t=2, N=20}]
{
 	\begin{minipage}[b]{.31\linewidth}
        \centering
        \label{poweraggregation2}\includegraphics[scale=0.5]{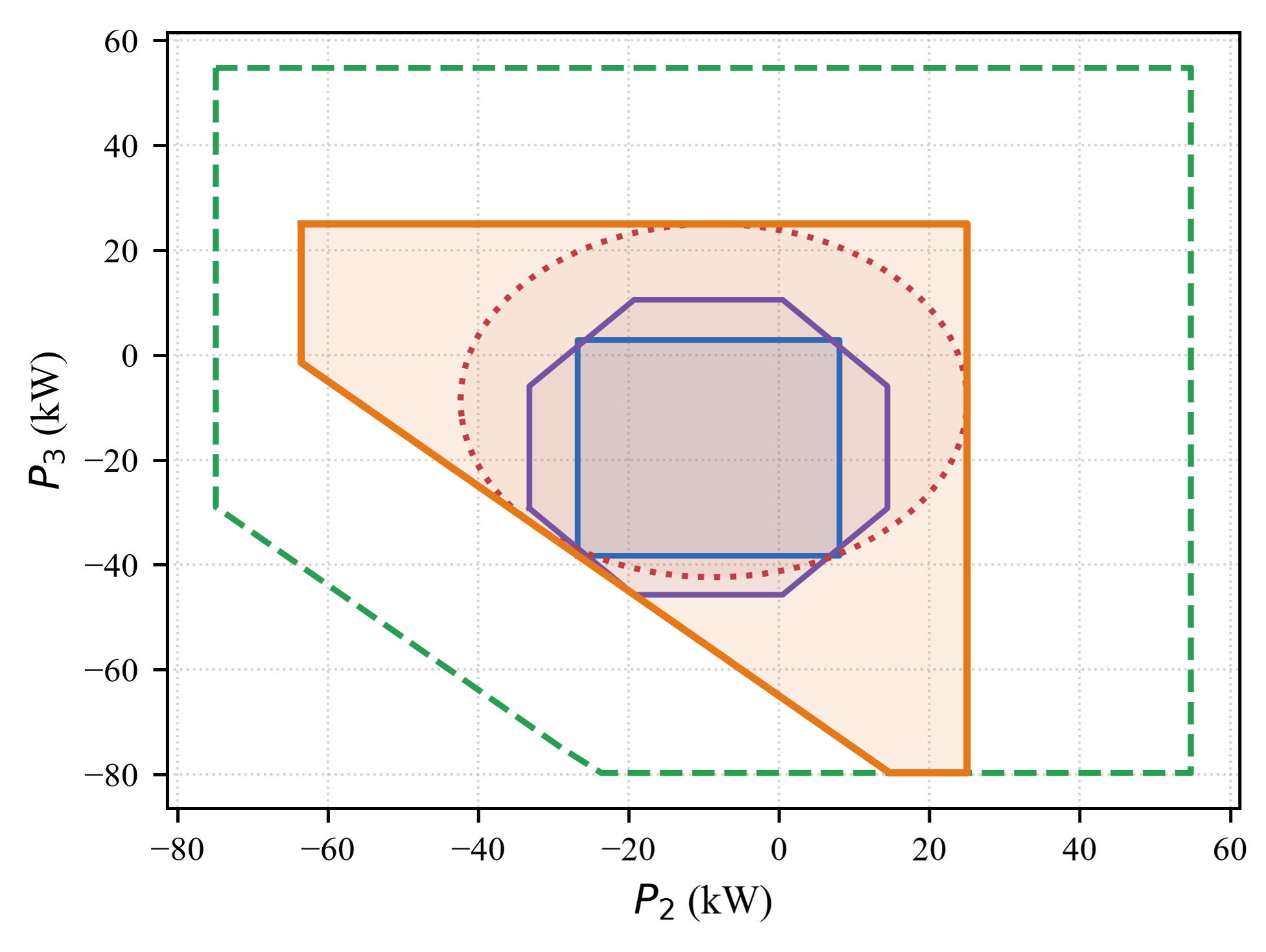}
    \end{minipage}
}
\subfigure[\textbf{t=3, N=20}]
{
 	\begin{minipage}[b]{.31\linewidth}
        \centering
        \label{poweraggregation3}\includegraphics[scale=0.5]{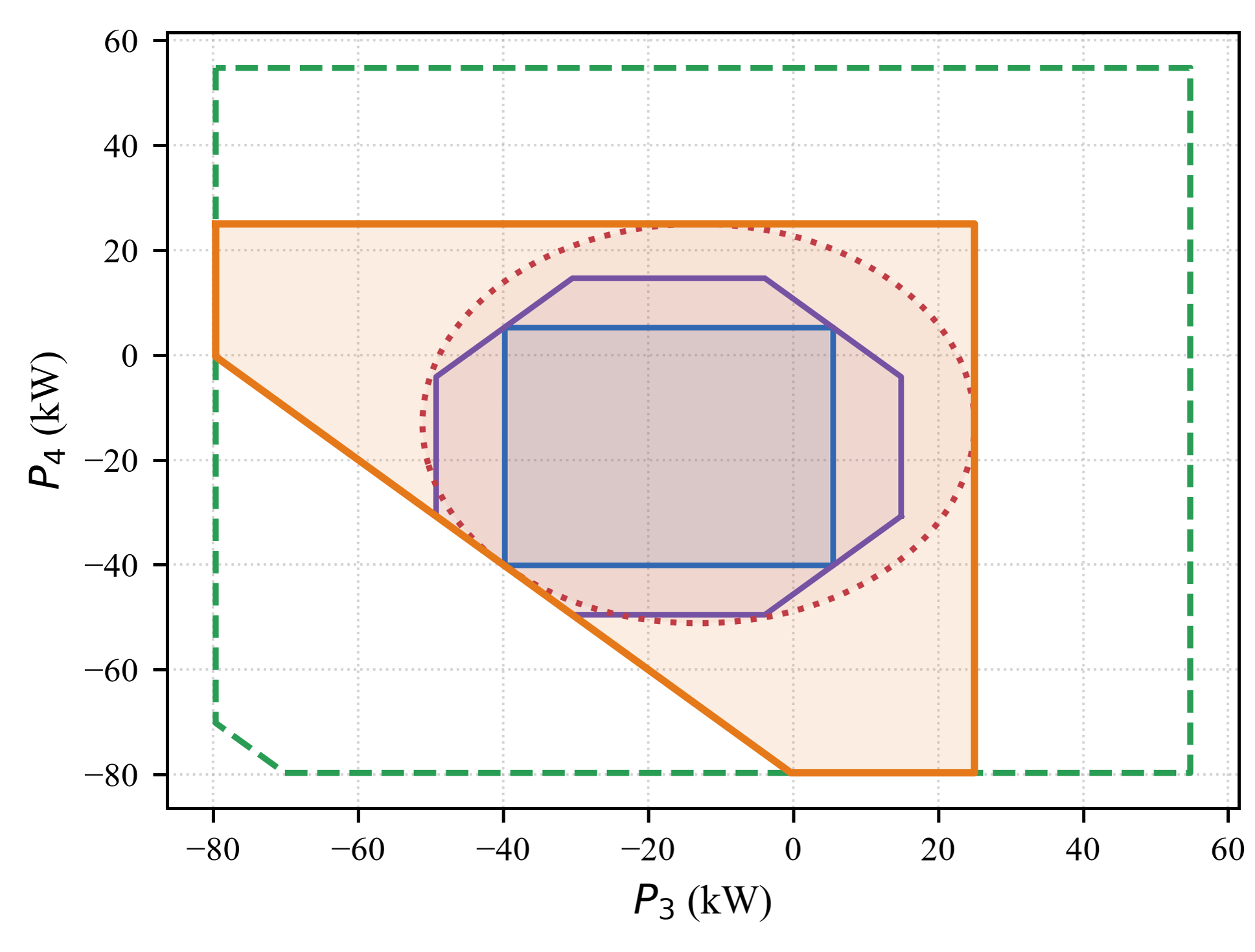}
    \end{minipage}
}
  \vspace{-.5em}
\caption{Comparison of projected aggregate net-power flexibility regions
for the small-scale case with $N=20$ DERs at selected consecutive time
steps. The uncoupled Minkowski sum is a reference envelope, while the
remaining models are inner approximations of the coupled flexibility
region}
\end{figure*}

\vspace{-1em}
\subsection{Comparison with Existing Aggregate Models}
\label{subsec:benchmark_comparison}
The small-scale case is used to compare the proposed method with four 
aggregate flexibility models: an uncoupled Minkowski-sum model, a box 
inner approximation, a zonotope inner approximation, and a Chebyshev 
inner-ball approximation. All methods use the same DER parameters, 
scheduling horizon, and local operating constraints. The box, zonotope, 
inner-ball, and proposed models additionally enforce the same feeder and 
portfolio-level coupling constraints.

The uncoupled flexibility region is calculated as Eq.~(\ref{eq:minkowski_sum}) above.
The box benchmark is represented by
\begin{equation}
\mathcal{P}_{\mathrm{box}}
=
\left\{
\boldsymbol c_{\mathrm b}
+
\operatorname{diag}(\boldsymbol r_{\mathrm b})
\boldsymbol u
\,\middle|\,
\|\boldsymbol u\|_{\infty}\leq 1
\right\},
\label{eq:box_benchmark}
\end{equation}
where $\boldsymbol c_{\mathrm b}$ and $\boldsymbol r_{\mathrm b}$ denote
the center and half-width vector, respectively. For each two-dimensional
projection, the box is homothetically scaled until its four vertices are
feasible under the coupled DER model.

The zonotope benchmark is defined as
\begin{equation}
\mathcal{P}_{\mathrm{zono}}
=
\left\{
\boldsymbol c_{\mathrm z}
+
G_{\mathrm z}\boldsymbol u
\,\middle|\,
\|\boldsymbol u\|_{\infty}\leq 1
\right\},
\label{eq:zonotope_benchmark}
\end{equation}
where $\boldsymbol c_{\mathrm z}$ is the zonotope center and
$G_{\mathrm z}$ contains its generators. Four generator orientations,
$0^{\circ}$, $45^{\circ}$, $90^{\circ}$, and $135^{\circ}$, are fixed
before the comparison. The zonotope is uniformly contracted until all
of its vertices satisfy the coupled portfolio constraints. 

Motivated by the Chebyshev-centering models in
\cite{10738509,11225608}, the inner-ball
benchmark is expressed as
\begin{equation}
\mathcal{P}_{\mathrm{ball}}
=
\left\{
\boldsymbol p
\,\middle|\,
\|\boldsymbol p-\boldsymbol c_{\mathrm o}\|_2
\leq r_{\mathrm o}
\right\},
\label{eq:inner_ball_benchmark}
\end{equation}
where $\boldsymbol c_{\mathrm o}$ and $r_{\mathrm o}$ are the center and
radius, respectively. Given a certified inner polytope
$\{\boldsymbol p\mid F\boldsymbol p\leq\boldsymbol g\}$, its largest
inscribed Euclidean ball is obtained from
\begin{equation}
\begin{aligned}
\max_{\boldsymbol c_{\mathrm o},\,r_{\mathrm o}}\quad
&r_{\mathrm o}\\
\mathrm{s.t.}\quad
&F_j\boldsymbol c_{\mathrm o}
+\|F_j\|_2 r_{\mathrm o}
\leq g_j,
\quad \forall j.
\end{aligned}
\label{eq:chebyshev_ball}
\end{equation}
The resulting ball is therefore contained in the coupled flexibility
region. 

For comparison, the area-coverage ratio of an inner model
$\widehat{\mathcal P}$ is defined as
\begin{equation}
\rho_{\mathrm{area}}
=
\frac{
\operatorname{Area}(\widehat{\mathcal P})
}{
\operatorname{Area}(\mathcal P_{\mathrm{ref}})
}
\times 100\%,
\label{eq:area_coverage}
\end{equation}
where $\mathcal P_{\mathrm{ref}}$ is obtained using 180 uniformly
distributed validation directions. 

Fig.{2} compares the five models for the
consecutive-time projections $(P_t,P_{t+1})$. The uncoupled Minkowski
sum is substantially larger than the coupled region because it does not
enforce the feeder power and aggregate energy limits. Its projected area
is approximately $2.31$--$2.74$ times that of the coupled reference,
demonstrating that the simple Minkowski sum can considerably overestimate
deliverable portfolio flexibility. The box model captures $23.10\%$--$29.63\%$ of the coupled projected
area, with an average coverage of $26.37\%$. The zonotope improves the
average coverage to $41.05\%$, with projection-level values ranging from
$35.76\%$ to $43.72\%$. The Chebyshev inner ball provides an average
coverage of $57.50\%$, ranging from $55.12\%$ to $58.32\%$. Although
these compact models are computationally convenient, their prescribed
geometries cannot fully represent the asymmetric and temporally coupled
boundaries of the portfolio flexibility region. In contrast, the proposed directional approximation matches the
180-direction reference within numerical tolerance for all evaluated
two-dimensional projections. Its area-coverage ratio is about $100\%$ in
these projections, using between 16 and 64 construction directions.
This result does not imply that the entire $T$-dimensional flexibility
region is recovered exactly. 

\begin{table}[t]
\centering
\footnotesize
\caption{Computational Time Comparison with Grid Limits (T=6)}
\label{tab:runtime}
\renewcommand{\arraystretch}{1.1}
\setlength{\tabcolsep}{5pt}
\vspace{-1em}
\begin{tabular}{ccccccc}
\hline
$N$ & $M$ & Vars & Cons & Build (s) & Median Solve (s) & Total (s) \\ \hline
\multicolumn{7}{l}{\textit{(a) Small case}} \\
20 & 24  & 734 & 1072  & 0.046 & 0.007 & 0.198 \\
20 & 48  & 734 & 1072 & 0.046 & 0.005 & 0.282 \\
20 & 64  & 734 & 1072 & 0.046 &  0.008 & 0.563 \\
20 & 128 & 734 & 1072 & 0.046 & 0.008 & 0.987 \\[1pt]
100 & 24  & 3586 & 5315 & 0.202 & 0.078 & 2.047 \\ 
100 & 48 & 3586 & 5315 & 0.202 & 0.085 & 4.776 \\
100 & 64 & 3586 & 5315 & 0.202 & 0.073 & 5.068\\
100 & 128 & 3586 & 5315 & 0.202 & 0.055 & 8.070 \\
100 & 256 & 3586 & 5315 & 0.202 & 0.059 & 16.230 \\ 
200 & 24 & 7214 & 10453 & 0.358 & 0.380 & 9.499 \\
200 & 48 & 7214 & 10453 & 0.358 & 0.298 & 15.378\\
200 & 64 & 7214 & 10453 & 0.358 & 0.315 & 19.658\\
200 & 128 & 7214 & 10453 & 0.358 & 0.266 & 35.614\\
200 & 256 & 7214 & 10453 & 0.358 & 0.277 & 72.436\\
\hline
\hline
\multicolumn{7}{l}{\textit{(b) Large case}} \\
500 & 24 & 17822 & 26065 & 0.874 & 1.798 & 38.251 \\ 
500 & 48 & 17822 & 26065 & 0.874 & 1.256 & 59.138 \\
500 & 64 & 17822 & 26065 & 0.874 & 1.599 & 92.584\\
500 & 128 & 17822 & 26065 & 0.874 & 1.354 & 177.066 \\
500 & 256 & 17822 & 26065 & 0.874 & 1.434 & 360.863\\ 
1000 & 24 & 35770 & 51584 & 1.837 & 3.719 & 83.717 \\
1000 & 48 & 35770 & 51584 & 1.837 & 3.102 & 144.305 \\
1000 & 64 & 35770 & 51584 & 1.837 & 3.437 & 196.714\\
1000 & 128 & 35770 & 51584 & 1.837 & 3.118 & 368.852\\
1000 & 256 & 35770 & 51584 & 1.837 & 3.273 & 801.275\\
1500 & 24 & 53706 & 77277 & 2.843 & 5.213 & 117.3211 \\
1500 & 48 & 53706 & 77277 & 2.843 & 4.732 & 204.352 \\
1500 & 64 & 53706 & 77277 & 2.843 & 5.183 & 298.677\\
1500 & 128 & 53706 & 77277 & 2.843 & 5.372 & 635.823\\
1500 & 256 & 53706 & 77277 & 2.843 & 5.204 & 1170.138\\
2000 & 24 & 71558 & 103315 & 4.214 & 6.691 & 155.7275\\
2000 & 48 & 71558 & 103315 & 4.214 & 7.033 & 327.785 \\
2000 & 64 & 71558 & 103315 & 4.214 & 6.361 & 359.339\\
2000 & 128 & 71558 & 103315 & 4.214 & 7.451 & 854.071\\
2000 & 256 & 71558 & 103315 & 4.214 & 6.162 & 1400.095\\
\hline
\end{tabular}
\end{table}
\vspace{-1em}
\subsection{Direction-Selection Sensitivity}
\label{subsubsec:M_sensitivity}

The effect of the direction budget $M$ is evaluated using the nested
sequence $M\in\{16,24,32,48,64,96\}$.
For each projection, normalized Gaussian construction directions are
introduced in batches of eight. Each sampled direction requires one
support LP and generates one feasible boundary trajectory. A separate
set of 180 uniformly distributed directions is used to evaluate the
maximum normalized support deficit $\varepsilon_M$. The validation
supports are calculated once and cached to ensure that all values of
$M$ are evaluated against the same reference.

As shown in Fig.~\ref{fig:M_sensitivity}, increasing $M$ generally
reduces the support deficit, although the reduction is not necessarily
smooth. The mean deficit decreases from $1.56\times10^{-1}$ at $M=16$
to $4.64\times10^{-3}$ at $M=24$. It remains nearly unchanged for
$M=32$ and $M=48$ before decreasing to numerical tolerance at $M=64$.
No measurable improvement is obtained by increasing the budget from
64 to 96.

The intermediate plateau is a consequence of the polyhedral geometry
and directional sampling. Additional directions do not modify the
current approximation when their support solutions coincide with
vertices that have already been identified. The approximation improves
only when a newly sampled direction exposes a previously unidentified
facet or extreme point. Consequently, the error curve can exhibit flat
segments followed by abrupt decreases. Such behavior is expected for a
support-function-based approximation and does not necessarily indicate
algorithmic stagnation.

For the considered small case, $M=64$ represents the empirical accuracy
elbow because all five projections satisfy the prescribed tolerance by
this point. Nevertheless, $M_{\max}=96$ is retained as a conservative
upper budget. During execution, sampling is terminated earlier if $\varepsilon_M\leq\epsilon_{\mathrm{tol}}$. If the improvement remains below the saturation threshold for six
consecutive batches, or if $M_{\max}$ is reached. With
$\epsilon_{\mathrm{tol}}=0.02$, the individual projections terminate
after 16--64 directions, resulting in an average of 30.4 support LPs
per projection.

\begin{figure}[t]
\centering
\includegraphics[width=0.86\columnwidth]
{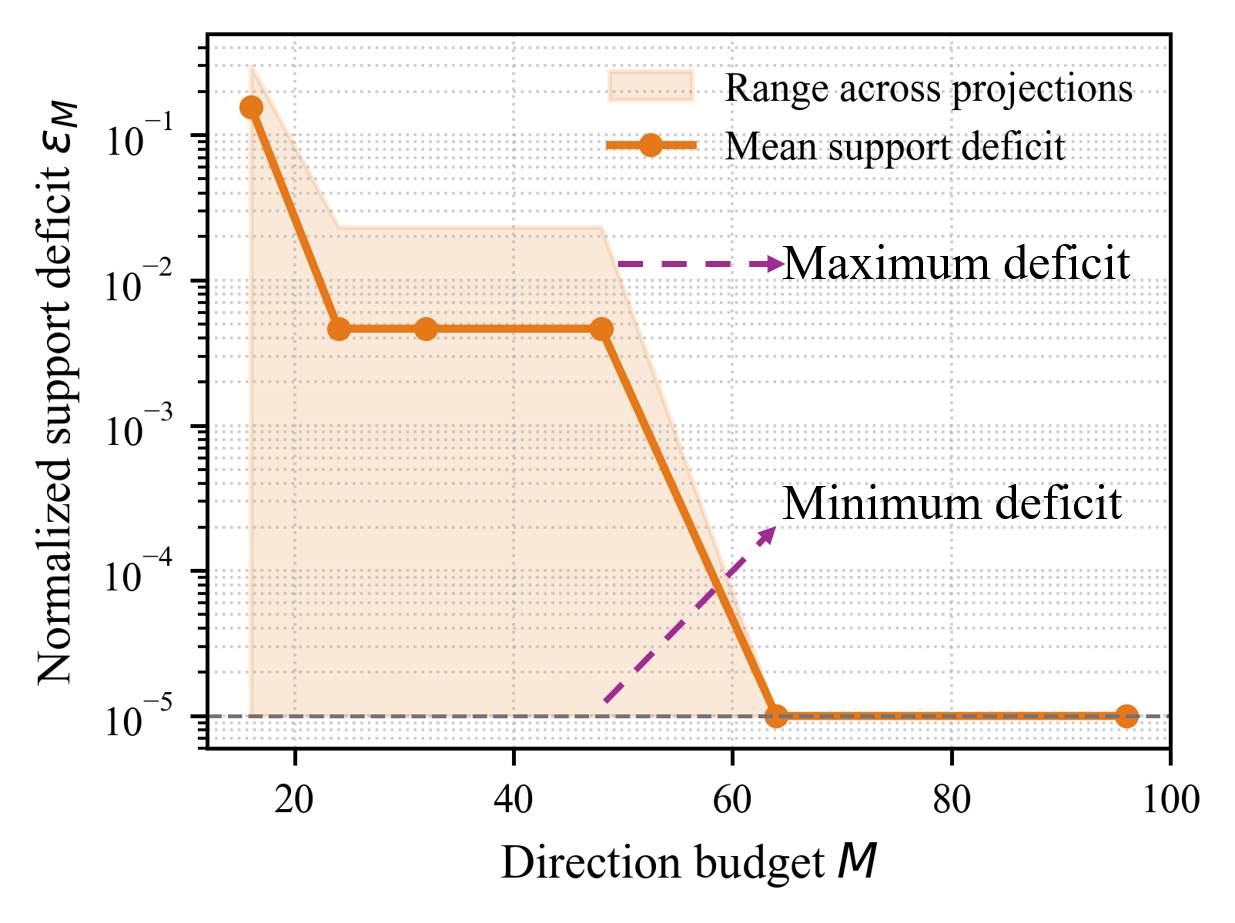}
\vspace{-1em}
\caption{Sensitivity of the normalized support deficit to the direction
budget $M$. The solid line shows the mean over the five consecutive-time
projections, while the shaded region indicates the corresponding
minimum-to-maximum range. Values below $10^{-5}$ are displayed at
$10^{-5}$ for visualization on the logarithmic scale.}
\label{fig:M_sensitivity}
\end{figure}

\subsection{Large-Scale Scalability Analysis}
Table~\ref{tab:runtime} reports the total runtime and median
per-direction solve time for different portfolio sizes $N$ and
direction budgets $M$. As shown in Fig.~\ref{fig:runtime_vsN_M128},
both measures increase regularly with $N$ for a fixed budget of
$M=128$. For the grid-constrained formulation, the total runtime
increases from $368.852$~s at $N=1000$ to $635.823$~s at $N=1500$
and $854.071$~s at $N=2000$. The observed trend is approximately
linear over the tested range and is consistent with the growth of
the support-LP dimensions discussed in Lemmas~\ref{lemma2} and
\ref{lemma3}. 

Fig.~\ref{fig:runtime_vsM_N2000} shows the effect of $M$ for
$N=2000$. The total runtime increases approximately linearly with
$M$, whereas the median per-direction solve time remains nearly
constant for moderate and large values of $M$. Thus, the runtime
growth is primarily caused by the increasing number of support
evaluations. The uncoupled Minkowski-sum model has the lowest per-direction runtime
because its support evaluation is separable across DERs. However, it
neglects cross-DER coupling and therefore serves only as an uncoupled
reference envelope. Among the coupled formulations, the grid-constrained
case is faster in the reported experiments, possibly because its tighter
bounds facilitate solver presolve, although this advantage is
instance-dependent.

Overall, the proposed directional method remains computationally
tractable for portfolios of up to $2000$ DERs. Runtime scales
approximately linearly with $M$, while its dependence on $N$ is governed
by the size and sparsity of each support LP. Because directional
evaluations are independent, parallel computation can further reduce
the wall-clock time.

\begin{figure}[t]
  \centering
  \includegraphics[width=0.86\columnwidth]{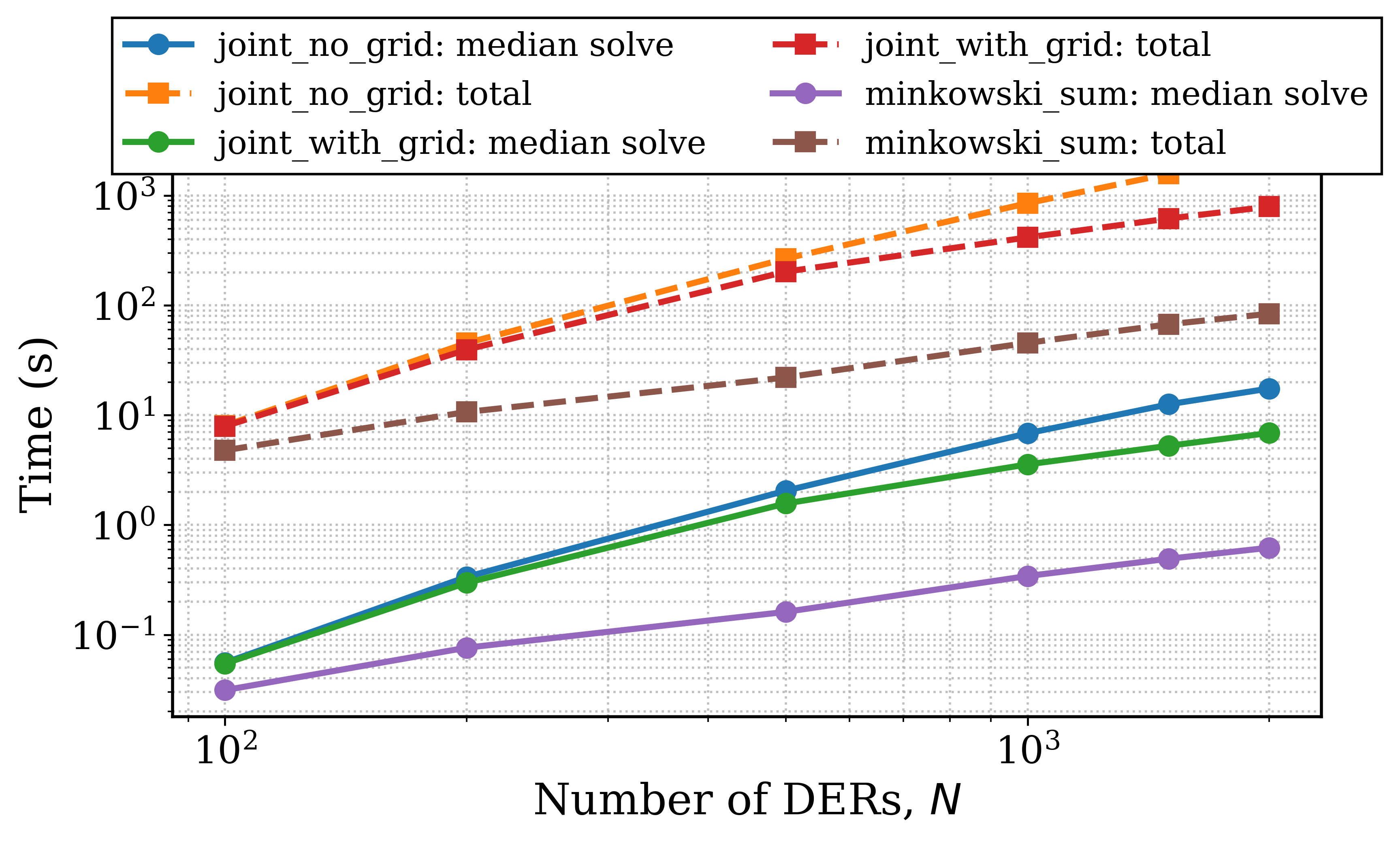}
  \vspace{-0.5em}
  \caption{Runtime versus the number of DERs $N$($M=128$)}
  \label{fig:runtime_vsN_M128}
\end{figure}


\begin{figure}[t]
  \centering
\includegraphics[width=0.82\columnwidth]{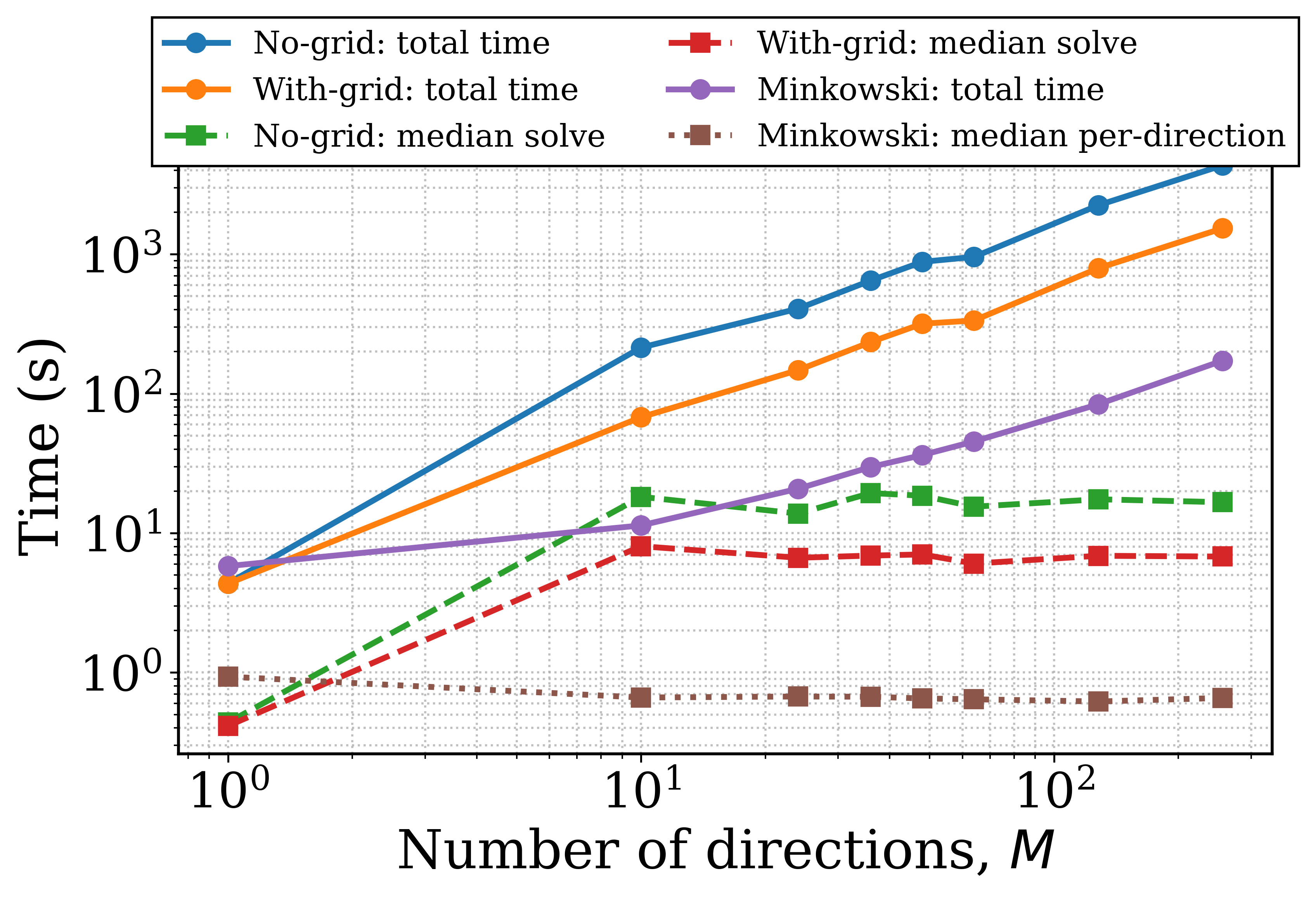}
\vspace{-0.5em}
\caption{Runtime versus the number of projection directions $M$($N=2000$)}
\label{fig:runtime_vsM_N2000}
\end{figure}


\section{Conclusion}
This paper presented a computationally transparent framework for
aggregating the power flexibility of heterogeneous DERs.
The framework represents device-level operating characteristics,
including conventional batteries, EVs, ZABs, SFBs, and PV systems,
through a unified linear formulation while accommodating feeder- and
portfolio-level coupling constraints. An adaptive directional support
method was developed to construct feasible inner approximations without
explicitly enumerating the facets or vertices of 
flexibility region.

Theoretical analysis established the feasibility and
non-overestimation properties of the resulting approximation and
characterized its computational complexity. In the small-scale case,
the proposed method provided substantially tighter projected regions
than the box, zonotope, and inner-ball benchmarks and matched the dense
directional reference within numerical tolerance using 16--64
construction directions. Large-scale simulations with up to 2000 DERs
showed approximately linear empirical scaling with the portfolio size
and direction budget over the tested ranges. These results demonstrate
that the proposed framework provides a scalable and theoretically
grounded approach to flexibility characterization. Future work will
consider uncertainty-aware models, adaptive high-dimensional direction
selection, parallel computation, and nonlinear distribution-network
constraints.

\bibliographystyle{IEEEtran}
\bibliography{references}

\newcommand{\MY}[1]{{\color{blue}[Zhai: {#1}]}}
  \newcommand{\Du}[1]{{\color{red}[Du: {#1}]}}

\end{document}